\newtheorem{lemma}{Lemma}
\newtheorem{corollary}{Corollary}
\newtheorem{theorem}{Theorem}
\theoremstyle{definition}
\newtheorem{definition}{Definition}
\newtheorem{thm}{Theorem}
\newtheorem{observation}[thm]{Observation}
\newcommand{\expect}[2]{\mathop{\mathbb{E}}_{#1} \left[ #2 \right]}
\newcommand{\MyFrame}[1]{\noindent \framebox[\textwidth]{ \begin{minipage}{0.97\textwidth} #1 \end{minipage}}}%
\begin{document}
\title{Prompt Scheduling for Selfish Agents\footnote{The work of A. Eden, M. Feldman and T. Taub was partially supported by the European Research Council under the European Union's Seventh Framework Programme (FP7/2007-2013) / ERC grant agreement number 337122, and by the Israel Science Foundation (grant number 317/17). The work of A. Eden, A. Fiat and T. Taub was partially supported by ISF 1841/14.}}
\author{Alon Eden%
\thanks{%
    {Tel Aviv University (\url{alonarden@gmail.com})}}
\and Michal Feldman%
\thanks{%
    {Tel Aviv University and Microsoft Research Israel (\url{michal.feldman@cs.tau.ac.il})}}
\and Amos Fiat%
\thanks{%
    {Tel Aviv University (\url{fiat@tau.ac.il})}}    
\and Tzahi Taub%
\thanks{%
    {Tel Aviv University (\url{tzahita@gmail.com})}}    
}

\maketitle

\begin{abstract}
We give a prompt online mechanism for minimizing the sum of [weighted] completion times. This is the first  prompt online algorithm for the problem.
When such jobs are strategic agents, delaying scheduling decisions makes little sense. Moreover, the mechanism has a particularly simple
form of an anonymous menu of options.  

\end{abstract}

\section{Introduction}

The setting herein includes [multiple] service queues and selfish agents that arrive online over time and can be processed
on one of $m$ machines.
Agents may have some (private) {\sl processing time} $p$ and/or some private {\sl weight} $w$.

The goal is to improve service as much as possible. Minimizing the sum of [weighted] completion times is one measure of
how good (or bad) service really is.

This problem has long been studied, as a pure optimization problem, without strategic considerations \cite{Graham79}.
Given a collection of jobs, lengths, and weights, the shortest weighted processing time order \cite{Smith56}, also known as Smith's rule, produces
a minimal sum of weighted completion times with a non-preemptive schedule on a single machine.

Schedules can be preemptive (where jobs may be stopped and restarted over time) or non-preemptive (where
a job, once execution starts, cannot be stopped until the job is done).

To the best of our knowledge, all online algorithms for this problem have the following
property: when a job arrives, there are no guarantees as to when it will finish. If preemption is
allowed, even if the job starts, there is no guarantee that it will not be preempted, or for how long.
If preemption is disallowed, the online algorithm keeps the job "hanging about" for some unknown length of time,
until the algorithm finally decides that it is time to start it.

Essentially, this means that when one requests service, the answer is ``OK --- just hang around and you will get service at some unknown
future date". It is in fact impossible to achieve any bounded ratio for the sum of [weighted] completion times if one has to start processing
the job as soon as possible. Some delay is inevitable. However, the issue we address is ``does the job know when it will be served?".
All of these issues are fundamental when considering that every such ``job" is a strategic agent.
It is not only that one avoids uncertainty, knowing the future schedule allows one to make appropriate plans for the interim.

In this paper we present {\sl prompt} online algorithms that immediately determine as to when an incoming job will be processed
(without preemption).
The competitive ratio is the best possible, amongst all prompt online algorithms, even if randomization is allowed
(the algorithm is in fact deterministic). The competitive
ratio compares the sum of completion times of the online algorithm with the [harder to achieve] sum of completion times of an optimal
preemptive schedule.
Moreover, viewed in the context
of strategic agents, these scheduling algorithms are not only DSIC but of a particularly simple form.

Upon arrival, agents are presented with a menu of possible options, where a menu entry is of the form $([b,e],q,\pi)$. This means that
the period from $b$ to $e$ is available on machine $q$ and will cost the agent $\pi$. These menus are anonymous and
do not depend on the agent that arrives. The agent then chooses one of the options.

Rational agents will never choose an interval that
is shorter than the processing time. (If so the agent cost is $\infty$).
It is not hard to show that there is no advantage for an agent to delay her arrival.

The cost to the agent is the sum of two components: (a) The time spent waiting, weighted by the agents' [private] weight.
{\sl I.e.}, highly impatient agents will have high weight, less impatient agents will have lower weight.
(b) The price, $\pi$, associated with an option on the menu. Agents seek to minimize their cost.

Consider the case of a single queue, a selfish agent will simply join the queue immediately upon arrival, there is no reason to delay. Thus,
jobs will be processed in first-in-first-out (FIFO) order.
However, this may be quite bad in terms of the sum of completion times. Imagine a job with processing time $L$, arriving at time zero, followed by
$\sqrt{L}$ jobs of length $1$, all of which arrive immediately after the first.
As the first job will only be done at time $L$, the sum of completion times for these $1+\sqrt{L}$ jobs is about $L^{3/2}$.
Contrawise, if the $\sqrt{L}$ length one jobs were processed before the length $L$ job, the sum of completion times would be about $2L$.
Obviously it seems a good idea to delay longer jobs and expedite shorter jobs.

Similarly, consider a first batch of  $L$ jobs, each of length $1$ and weight $1$, immediately followed by a single job of length $1$ and weight
$W$. For FIFO processing, the weighted sum of completion times is $L^2/2$ (for the weight 1 jobs) plus $(L+1)\cdot W$ (for the job of weight $W$).
Optimally, the weight $W$ job should be processed first, followed by the length 1 jobs. The weighted sum of completion times is then
about $W+L^2/2$. For any constant $L$ and sufficiently large $W$, the ratio between the two sums approaches $L+1$.

The main question addressed in this paper is how to produce such dynamic menus so as to incentivize selfish agents
towards behavior that achieves some desirable social goal, specifically, minimizing the sum of completion times.
The dynamic menu is produced based on the past decisions of the previous agents and the current time\footnote{For clarity we describe
the menu as though it was infinite. In fact, one can think of the process as though the menu is
presented entry by entry. The selfish job will provably
choose an option early on.}.

We measure the quality of the solution achieved by the competitive ratio, the ratio between the sum of completion times of the selfish agents, when
presented with the dynamic menus, divided by the minimal sum of completion times, when the future arrivals and their private values are known.
In fact, the comparison is with the optimal preemptive schedule (which could definitely be better than the optimal non-preemptive schedule).

We consider several scenarios:\begin{enumerate}
\item All agents have weight 1 and arbitrary processing times, nothing known apriori on the processing times. This models cases where
all agents are equally impatient but have different processing requirements.
The underlying idea here is to offer menu options that delay longer jobs so that they do not overly
delay many shorter jobs that arrive later.
\item All agents have processing time 1 and arbitrary weight, nothing known apriori on the weights.
The underlying idea here is to set prices so as to delay jobs of small weight and thus to allow later jobs of large weight to finish early.
\item Jobs with arbitrary processing times and weights bounded by a known bound $B_{\max}$.
This means that we have to delay long jobs and simultaneously
have to leave available time slots for jobs with large weights.
\end{enumerate}

The competitive ratios for the different scenarios appear in Table \ref{tab:cr}. We remark that the lower bounds
hold even if one assumes that the machines used are arbitrarily faster than the machines used by the optimal schedule that minimizes the sum of weighted
completion times.

\begin{table}
{\footnotesize \centering
\begin{tabular}{| l | l | l | l | l |}
\hline
 \begin{tabular}{c}Processing \\
 Time \end{tabular} & \begin{tabular}{c}Job \\
 Weight \end{tabular} & \begin{tabular}{c}Menu \\
 entries \end{tabular}& \begin{tabular}{c}Upper \\
 Bound\\(Deterministic) \end{tabular}& \begin{tabular}{l}Lower \\
 Bound\\(Randomized) \end{tabular}  \\ \hline
 $p_j \in \mathbb{Z}^{+}$  & $w_j=1$ & \begin{tabular}{l} intervals\\(various lengths) \\no prices \end{tabular}
 & $O(\log P_{\max})$ & $\Omega(\log P_{\max})$  \\ \hline
 $p_j=1$  & $w_j\in \mathbb{Z}^{+}$ & \begin{tabular}{l} unit length\\intervals\\with prices \end{tabular}
 & $O(\log W_{\max} (\log \log W_{\max} +\log n))$ & $\Omega(\log W_{\max})$ \\ \hline
 $p_j \in \mathbb{Z}^{+}$  & $w_j\in \mathbb{Z}^{+}$
 & \begin{tabular}{l} intervals\\(various lengths) \\with prices \end{tabular}
 & $O\left((\log n + \log P_{\max}) \cdot \log B_{\max}\right)$ & $\Omega( \max( \log B_{\max}, \log P_{\max}))$ \\ \hline
 \end{tabular}}
\caption{Competitive Ratios of our Dynamic Menus, and associated lower bounds.
	$P_{\max}$ is the longest job processing time in the input sequence, it is not known apriori. $W_{\max}$ is the maximal job weight in the sequence, it is not known apriori. $B_{\max}$ is an apriori upper bound on $W_{\max}$. 
}
\label{tab:cr}
\end{table}

\subsection{Related Work}

For one machine, weighted jobs, available at time zero, ordering the jobs in order of weight/processing time minimizes the sum
of competition times \cite{Smith56}. For one machine, unweighted jobs with release times, a preemptive schedule that always processes the
job with the minimal remaining processing time minimizes the sum of weighted completion times \cite{Schrage66,Schrage68}. As an offline problem,
where jobs cannot be executed prior to some earliest time, finding an optimal non-preemptive schedule
is computationally hard \cite{Hall97}.

For parallel machines, where jobs arrive over time, a preemptive schedule that always processes the jobs with the highest priority:
 weight divided by remaining processing time, is a 2 approximation \cite{Megow04}, this algorithm is called {\sl weighted shortest remaining processing time} (WSRPT).
If all weights are one this preemptive algorithm is called {\sl shortest remaining processing time} (SRPT).
Other online and offline algorithms to minimize the
sum of completion times appear in \cite{Bruno74,Shmoys95,Hall97}.

\cite{Phillips1998} show how to convert a preemptive online algorithm into a non-preemptive online algorithm while increasing the
completion time of the job by no more than a constant factor. This transformation strongly depends on not determining immediately when the job
will be executed. This is in comparison to a prompt algorithm that determines when the job is executed immediately upon the job arrival.

When selfish agents are involved, it is valuable to keep things simple \cite{HR09}. Offering selfish agents an anonymous menu of options is an example of such a simple process. More complicated mechanisms require trust on the part of the agents.

Recently, \cite{FFR17} considered a similar question to ours, where a job with private processing time had to choose
between multiple FIFO queues, where the servers had different speeds.
Here, dynamic posted prices were associated with every queue, with the goal of [approximately] minimizing the
makespan, the length of time until the last job would finish.
Shortly thereafter, \cite{IMPS17} used dynamic pricing to minimize the maximal flow time.
Dynamic pricing schemes were considered for non-scheduling cost minimization problems in \cite{CohenEFJ15}.

A constant approximation mechanism for minimizing sum of completion times for selfish jobs was considered in \cite{GkatzelisMR17}, where the setting was an offline setting, the processing time was known in advance and the weight was private information. In an online setting, \cite{ImK16} show a constant approximation preemptive mechanism that gives an $O(1/\epsilon^2)$ approximation to the sum of flow times when using machines that are faster by a factor of $1+\epsilon$.

In this paper our goals are pricing schemes that affect agents as to behave in a manner
that [approximately] minimizes the sum of weighted completion times.

There is a vast body of work on machine scheduling problems, in offline and online settings, with strategic agents involved and not,
and in a host of models. It is impossible to do justice to this body of work but a very short list of additional relevant papers includes
\cite{G66,Lenstra1977,Graham79,Lenstra90,NR01,CKN04,ILMS09}.

\section{The Model}
We consider a job scheduling setting with $m$ machines and $n$ jobs that arrive in real time, where $p_j$, $w_j$, and $r_j$ are, respectively, the  processing time, weight, and release time of the $j$th job to arrive. It may be that $r_j=r_{j+1}$, {\sl i.e.}, more than one job arrive at the same time. However, job decisions are made sequentially in index order.

A valid input for this problem can be described as a sequence of jobs $$\sigma=(r_1,w_1,p_1), (r_2,w_2, p_2), \ldots,(r_n,w_n,p_n),$$ where the {\sl release time} $r_{i}\leq r_{i+1}$ for $i=1,\ldots,n-1$, the {\sl job weight} $w_i\geq 1$ for $i=1,\ldots,n$, and the job {\sl processing time} $p_i\geq 1$ for $i=1,\ldots,n$. We refer to the $j$th job in this sequence as job $j$. We use the terms {\sl size} and processing time interchangeably. Moreover, if $p_j<p_{j'}$ we may say job $j$ is smaller than job $j'$, etc. Let $\sigma[1..\ell]$ be the length $\ell$ prefix of $\sigma$. The total volume of a set of jobs $D$, denoted $vol(D)$ is the sum of processing times of the jobs in $D$, i.e., $vol(D)=\sum_{j\in D}p_j$.

Let $s_j\geq r_j$ be the time at which job $j$ starts processing (on some machine $1\leq q \leq m$). The completion time of job $j$ is $c_j = s_j+p_j$.

The objective considered in this paper is to minimize the sum of [weighted] completion times; {\sl i.e.}, we wish to minimize $\sum_{j=1}^{n} w_j\cdot c_j$.

For jobs $j$, $j'$, with $j<j'$ and with $r_j=r_{j'}$, job $j$ is assigned (or chooses) machine $q_j$ at time $s_j$ before job $j'$ is assigned machine $q_{j'}$ at $s_{j'}$. We say that $(q_j,s_j)$ and $(q_{j'},s_{j'})$ overlap, if $q_j=q_{j'}$ and ($s_j\leq s_{j'}<c_j=s_j+p_j$ or $s_{j'}\leq s_j <c_{j'}=s_{j'}+p_{j'}$).

A valid (non-preemptive) schedule for an input $\sigma$ is a sequence $$(m_1,s_1), (m_2,s_2), \ldots, (m_n,s_n)$$ where no overlaps occur. An online algorithm determines $(m_j,s_j)$ after seeing $\sigma[1\ldots j]$ and before seeing job $j+1$.

We consider online mechanisms where jobs are selfish agents, processing times and weights are private information, and job $j$ is presented with a menu of options upon arrival. Every option on the menu is of the form $(I,q,\pi)$ where (i) $I$ is a time interval $[b(I),e(I)]$, with integer endpoints, and where $b(I)\geq r_j$, (ii) $1 \leq q \leq m$ is some machine, and (iii) $\pi$ is the price for choosing this entry.  The menu of options presented to job $j$ is computed after jobs $1,\ldots,j-1$ have all made their choices and also depends on the release time of job $j$, $r_j$ (because one cannot process a job in the past). We assume {\sl no feedback} from jobs after they choose their menu options, i.e., if a job of size $p$ chooses an interval $I$ of length $|I|>p$, we do not know the interval is only partly used, and specifically, cannot offer the $|I|-p$ remaining to future jobs.

For job $j$ that chooses menu entry $([b(I),e(I)],q,\pi)$ we use the following notation (i) $I(j)$ for the interval chosen by job $j$, $[b(I),e(I)]$, (ii) $M(j)$ for the machine chosen by job $j$, $q$, and (iii) $\Pi(j)$ for the price of the entry chosen by $j$, $\pi$.

Although the menus described above are infinite, one can present the menu items sequentially.
With unit weight jobs, a job of processing time $p$ will make its choice within the first $\log p$ options presented.
With unit length jobs, a job of weight $w$ will make its choice within the first $\log w$ options presented.
With arbitrary lengths and arbitrary weights, a job of processing time $p$ and of weight $w$ will make its choice within the first $\log p \cdot \log w$ options presented.

The cost to job $j$ with weight $w_j$ and processing time $p_j$ for choosing the menu entry $([b,e],q,\pi)$ is $\infty$ if the time interval is too short: $e-b<p_j$. If $e-b\geq p_j$ then the cost to job $j$ is a cost of $w_j$ for every unit of time until job $j$ starts processing, plus the extra price from the menu. {\sl I.e.}, the cost to job $j$ with release time $r_j$, processing time $p_j$ and weight $w_j$, for choosing menu entry $([b,e],q,\pi)$, $e-b\geq p_j$, is $$(b+p_j)\cdot w_j + \pi.$$

For the specialized cases of weight one jobs or unit length jobs the general model above is somewhat simpler:

\subsection{Modeling weight one jobs with arbitrary Processing times}

If jobs have weight one, we give (optimal) menus that do not require pricing menu entries. Any entry on the menu is available for free. Therefore, we can simplify the menu structure as follows: 
 The job chooses a time interval and a machine from a menu with  entries of the form $([b,e],1\leq q \leq m)$ where the first entry is a time interval, and the second entry is a machine\footnote{Although the general setting allows pricing menu items, it turns out that for weight 1 jobs the optimal menu does not need to differentiate
 	entries by price.}.  The crux of the matter is coming up with the right menu.

Jobs choose from the menu one of the entries immediately upon arrival. As above,
we say that job $j$ chooses menu entry $(I(j),M(j))$ where $I(j)$ is an interval, and $1 \leq M(j) \leq m$.

For job $j$ with arrival time $r_j$, and processing time $p_j$ the cost associated with choosing the menu item
$([b,e],1\leq q \leq m)$ is $\infty$ if $p_j > e-b$ and $(b+p_j)$ otherwise. Jobs always seek to minimize their cost.

\subsection{Modeling unit length jobs of arbitrary weight}

Every job requires one unit of processing time on one of $m$ different processors. Every job $j$ is a selfish agent that has a private weight $w_j$, the cost to the job of one unit of delay.

The job chooses a machine and time slot from a menu with  entries of the form $([i,i+1],1\leq q \leq m, \pi)$ where the first entry is a time slot, the second entry is a machine, and the third entry is the price of this time slot on the machine.

Jobs choose from the menu one of the entries immediately upon arrival. Job $j$ is said to choose menu item $(I(j),M(j),\Pi(j))$ where $I(j)$ is a length one interval, $1 \leq M(j) \leq m$, and $\Pi(j)$ is the price to be paid for choosing this option.

For job $j$ with arrival time $r_j$, and weight $w_j$ the  cost associated with choosing the menu item
$([i,i+1],1\leq q \leq m, \pi)$ is  $w_j (i+1)+\pi$. Jobs always seek to minimize their cost.

\section{Dynamic Menu for Selfish Jobs with Heterogeneous Processing Times}\label{sec:dynamic}
In this section we introduce a dynamic menu based mechanism, for jobs of weight one and heterogeneous processing times, with competitive ratio $O(\log P_{\max})$, where $P_{\max}$ is the maximal job processing time among all jobs.

In Section~\ref{sec:natural_algs} we present a couple of natural algorithms that have competitive ratio of $\Omega(\sqrt{P_{\max}})$.
In Section~\ref{sec:integer-sequence} we provide integer sequences and corresponding interval sequences that serve as a building block for our dynamic menu mechanism, which is presented in Section~\ref{subsec:dynamic}.
Finally, in Section~\ref{sec:analysis} we provide the analysis showing that the dynamic menu gives a competitive ratio of $O(\log P_{\max})$.

\subsection{Warmup: non-working algorithms
}\label{sec:natural_algs}
We present two natural algorithms for prompt scheduling on a single machine, which result in poor competitive ratios. 
Assume $P_{\max}=2^d$, for some constant $d$, and $P_{\max}$ is known in advance. 
Assume also that all jobs have release time $0$ (but arrive sequentially). 
In this case, the optimal algorithm sorts jobs from short to long processing times, and schedules them based on this order.
In an attempt to mimic this optimal (offline) algorithm by an online algorithm --- in case where the input starts with a sequence of long jobs --- we would like to introduce delays, keeping some early intervals vacant for short jobs that might come in the future.

Consider an algorithm that sets a static interval sequence (i.e., a sequence that is set once and for all from the outset), and schedules each arriving job on the first interval on which it fits.

One natural algorithm sets (an infinite loop of) the following sequence of intervals: the $i$th interval for $i=0, \ldots, d$ is of length $2^i$. 

Consider the following input: for $i = 0, \ldots, d$, a job of size $2^i$ arrives (all jobs with release time zero, job $i+1$ follows job $i$), followed by $n$ jobs of length $1$ (where $n$ is determined later).
The cost for the optimal algorithm is:
\begin{eqnarray*}
	Cost(OPT)&=&\sum_{i=1}^{n}i+\sum_{i=0}^{d}\left(n+\sum_{j=0}^{i-1}2^j+2^i\right)\\
			&\leq& n^2+ (d+1)n+2\sum_{i=0}^{d}2^i\\
			&\leq& n^2+(d+1)n+2^{d+2}.
\end{eqnarray*}
In the proposed algorithm, the last $n$ unit-length jobs will be scheduled after the first $d+1$ jobs, which have total processing time of $2^{d+1}-1$. This implies:
\begin{eqnarray*}
Cost(ALG)\geq n2^d.
\end{eqnarray*}
For $n=\Theta(2^{\nicefrac{d}{2}})$, we get that $Cost(OPT)=\Theta(2^d)$, while $Cost(ALG)=\Omega\left(2^{d+\nicefrac{d}{2}}\right)$, leading to a competitive ratio of $\Omega\left( \sqrt{P_{\max}}\right)$.

The proposed algorithm failed because it did not leave enough space for the unit length jobs. 
A possible attempt to fix this problem would be to have more short intervals than long ones. 
One natural such sequence is (an infinite loop of) $2^d$ length 1 intervals, followed by $2^{d-1}$ length 2 intervals, etc., ending with a single interval of length $2^d$.

Consider an input sequence in which $2^{\nicefrac{d}{2}}$ jobs of size 2 arrive at time 0, followed by one large job of size $2^d$.
The optimal schedule processes the short jobs first, then the large one, resulting in cost: 
\begin{eqnarray*}
Cost(OPT)&=&\sum_{i=1}^{2^{\nicefrac{d}{2}}}2i+\left(2\cdot 2^{\nicefrac{d}{2}}+2^d\right)= \Theta\left(2^d\right).
\end{eqnarray*}
In the proposed algorithm, every short job will be scheduled after the first $2^d$ unit length intervals (as they do not fit unit length intervals). The obtained cost is thus 
\begin{eqnarray*}
Cost(ALG)\geq 2^d\cdot 2^{\nicefrac{d}{2}}=2^{d+\nicefrac{d}{2}},
\end{eqnarray*}
resulting in $\Omega\left(\sqrt{P_{\max}}\right)$ competitive ratio, as before.
Thus, saving too much space for short jobs might result in unnecessary delay, which may lead to a poor competitive ratio.

Motivated by the above two failed attempts, we now present our solution:

\subsection{The $S_k$ Integer and Interval Sequences } 
\label{sec:integer-sequence}

We define sequences of integers $S_k$, $k=0,1,\ldots$, as follows:  Let $S_0= \langle 1\rangle$ and for $k>0$ let $S_k=S_{k-1} \| S_{k-1} \| \langle 2^k\rangle$ where $\|$ denotes concatenation. Ergo,
\begin{eqnarray*} S_0 &=& \langle 1\rangle; \\
S_1 &=& S_0 \| S_0\| \langle 2^1\rangle = \langle 1,1,2\rangle ;\\
S_2 &=& S_1 \| S_1 \| \langle 2^2\rangle = \langle 1,1,2,1,1,2,4\rangle; \\
&\cdots&
\end{eqnarray*}

Let $n_k=2^{k+1}-1$ denote the length of $S_k$ (follows inductively from  $n_0=1$ and $n_k=2n_{k-1}+1$). Let $S_k[i]$, $i=1,\ldots,n_k$ be the $i$th element of $S_k$.  Let
$S_\infty$ be an infinite sequence whose length $n_k$ prefix is $S_k$ (for all $k$):
\[S_\infty=\langle1,1,2,1,1,2,4,1,1,2,1,1,2,4,8,1,1,2,1,1,2,4,1,1,2,1,1,2,4,8,16,1,\ldots\rangle.\] Let $S_\infty[i]$, $i=1,2, \ldots$ be the $i$th element of $S_\infty$. Note that $S_k[i] = S_{k'}[i]$ for all $k \leq k'$ and all $i= 1, \ldots, n_k$, ergo, $S_k$ is a prefix of $S_{k'}$ for $k\leq k'$.

\begin{lemma} \label{lem:sumsk} For all $d\geq 0$, for all $0 \leq k \leq d$, the sum of all the $2^k$ value items in $S_d$ is equal $2^d$:
	$$\sum_{1 \leq i \leq n_d: S_d[i]=2^k}2^k = 2^d.$$ \label{lem:len_sk}
\end{lemma}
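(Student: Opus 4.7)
The natural approach is induction on $d$, since $S_d$ is defined recursively. Let $f_d(k)$ denote the number of occurrences of the value $2^k$ in $S_d$; the lemma is equivalent to showing $f_d(k) = 2^{d-k}$ for all $0 \le k \le d$, since then $\sum_{i:S_d[i]=2^k} 2^k = f_d(k)\cdot 2^k = 2^d$.

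For the base case $d=0$, only $k=0$ is allowed; since $S_0 = \langle 1 \rangle$, we have $f_0(0)=1=2^0$, as required. For the inductive step, suppose the claim holds for $d-1$, i.e., $f_{d-1}(k)=2^{d-1-k}$ for every $0\le k\le d-1$. Using the recursion $S_d = S_{d-1} \| S_{d-1} \| \langle 2^d \rangle$, I split into two cases. If $0 \le k \le d-1$, then the value $2^k$ appears exactly $f_{d-1}(k)$ times in each of the two copies of $S_{d-1}$, and does not appear as the trailing element (which is $2^d \ne 2^k$); hence $f_d(k) = 2 f_{d-1}(k) = 2 \cdot 2^{d-1-k} = 2^{d-k}$. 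If $k=d$, the value $2^d$ does not appear anywhere in $S_{d-1}$ (by the inductive structure, every element of $S_{d-1}$ is a power of $2$ with exponent at most $d-1$), so it appears only as the appended last element, giving $f_d(d)=1=2^{d-d}$.

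The only minor subtlety is justifying that $2^d$ does not occur inside $S_{d-1}$; this itself follows by a trivial induction, since the elements appended at stage $j$ have value $2^j \le 2^{d-1}$ for all $j \le d-1$. Multiplying $f_d(k)=2^{d-k}$ by $2^k$ then yields the claimed identity $\sum_{i:S_d[i]=2^k} 2^k = 2^d$. I don't anticipate any real obstacle; the recursive definition makes the induction essentially mechanical.
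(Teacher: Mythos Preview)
Your proof is correct and follows essentially the same approach as the paper: induction on $d$, using the decomposition $S_d = S_{d-1}\|S_{d-1}\|\langle 2^d\rangle$ to double the contribution from $S_{d-1}$ for $k\le d-1$ and handling $k=d$ separately. Your version is slightly more explicit (counting occurrences $f_d(k)$ and justifying that $2^d$ does not occur in $S_{d-1}$), but the underlying argument is the same.
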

\begin{proof}
	Proof via induction over $d$. The claim is obviously true for $S_0$. Assume the claim is true for $S_{d-1}$. {\sl I.e.}, for all $0 \leq k \leq d-1$, $$\sum_{1 \leq i \leq n_{d-1}: S_{d-1}[i]=2^k}2^k = 2^{d-1}.$$ Since $S_d$ is a concatenation of two $S_{d-1}$ sequences and the singleton sequence $\langle 2^d\rangle$, we get that for all $0 \leq k \leq d-1$ $$\sum_{1 \leq i \leq n_{d}: S_{d}[i]=2^k}2^k = 2\cdot 2^{d-1} = 2^d.$$ The claim also holds trivially for $k=d$.
\end{proof}

We use the  $S_k$ sequences to define interval sequences. 
Let $\gamma_i$ be the sum of the first $i$ entries in $S_\infty$, $\gamma_i = \sum_{j=1}^i S_{\infty}[j]$ ({\sl i.e.}, $\gamma_1=1$, $\gamma_2=2$, $\gamma_3=4$, etc.).

We define $S_k(t)$, $t\geq 0$, to be a sequence of $n_k$ consecutive intervals, the first of which starts at time $t$, and where the length of the $j$th interval equals $S_k[j]$. {\sl I.e.},
\[S_k(t)=\left\langle [t,t+\gamma_1], [t+\gamma_1,t+\gamma_2],
\ldots,\left[t+\gamma_{n_k-1},t+\gamma_{n_k}\right] \right\rangle.\]
For example
\begin{eqnarray}
S_2(2)= \langle [2,3],[3,4],[4,6],[6,7],[7,8],[8,10],[10,14]\rangle.\label{eq:s22}
\end{eqnarray}

For any interval sequence $S$ let $b(S)$ be the start of the first interval in $S$ and let $e(S)$ be the end of the last interval in $S$.  For example, $b(S_2(2))=2$ and $e(S_2(2))=14$. 

We say that $S_k$ {\sl appears} in $S_d(t)$ if there exists some $t'$ such that the interval sequence $S_k(t')$ is a contiguous subsequence of $S_d(t)$.
In this case we also say that $S_k(t')$ appears in $S_d(t)$.
Note that while $S_k$ is a sequence of integers, both  $S_k(t')$ and $S_d(t)$ are interval sequences.

By construction, for any $k$ and any $t\neq t'$ if $S_k(t)$ and $S_k(t')$ appear in some $S_d(\tilde{t})$, then $\left[b(S_k(t)),e(S_k(t))\right]$ and $\left[b(S_k(t')),e(S_k(t'))\right]$ are disjoint except, possibly, for their endpoints. Let $I$ be an interval of length $2^k$ that appears in $S_\infty(t)$. Then there is a unique $t'$ such that $S_k(t')$ appears in $S_\infty(t)$ and $I$ is the last interval of $S_k(t')$.
It follows from Lemma~\ref{lem:sumsk} that
\begin{corollary}	\label{cor:len_sk}
	 For all $k\leq d$, for all $t$,
	\begin{enumerate}
		\item $S_k$ appears in $S_d(t)$ $2^{d-k}$ times.
		\item The sum of the lengths of the intervals in $S_d(t)$ is $(d+1)2^d$.
	\end{enumerate}

\end{corollary}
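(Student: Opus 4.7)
The plan is to prove the two claims by induction on $d$, leveraging the recursive decomposition $S_d = S_{d-1} \| S_{d-1} \| \langle 2^d\rangle$ together with Lemma~\ref{lem:sumsk}. Throughout, I will use the fact that an appearance of $S_k$ in $S_d(t)$ is determined entirely by its starting position in the underlying integer sequence $S_d$: once a contiguous integer-subsequence match is found, the corresponding $t'$ is forced by the starting time of that subsequence. So both items reduce to combinatorial statements about $S_d$.

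For item (2), I would simply sum the contributions of the distinct values appearing in $S_d$. By Lemma~\ref{lem:sumsk}, for each $0 \le k \le d$ the entries equal to $2^k$ in $S_d$ contribute $2^d$ to the total, so the sum of all interval lengths in $S_d(t)$ equals $\sum_{k=0}^{d} 2^d = (d+1)2^d$. (As a sanity check one can verify this independently by a one-line induction: if $L_d$ is the total length, the decomposition gives $L_d = 2L_{d-1} + 2^d$ with $L_0 = 1$, whose solution is $(d+1)2^d$.)

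For item (1), I would induct on $d$, with the base case $d=k$ trivial since $S_k$ occurs once in itself and $2^{d-k} = 1$. For the step, take $k < d$ and consider any appearance of $S_k$ in $S_d$. Since every entry of $S_k$ is at most $2^k < 2^d$, no appearance can touch the trailing singleton $\langle 2^d\rangle$, so appearances lie inside $S_{d-1} \| S_{d-1}$. The inductive hypothesis counts $2^{d-1-k}$ appearances strictly inside each copy of $S_{d-1}$, so the target bound $2 \cdot 2^{d-1-k} = 2^{d-k}$ follows provided no appearance \emph{straddles} the boundary between the two copies.

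Ruling out the straddling case is the main obstacle, and I would handle it in two subcases. The entry at the boundary position $n_{d-1}$ of $S_d$ (the last element of the first $S_{d-1}$) has value $2^{d-1}$. If $k < d-1$, then $2^{d-1} > 2^k$ exceeds every entry of $S_k$, so no appearance of $S_k$ can contain position $n_{d-1}$, and straddling is impossible. If $k = d-1$, then by Lemma~\ref{lem:sumsk} the value $2^{d-1}$ appears exactly once in $S_{d-1}$, necessarily at its last position; hence in any appearance of $S_{d-1}$ covering position $n_{d-1}$, that position must be the appearance's last index, which forces the appearance to coincide with the first copy of $S_{d-1}$ and thus not to straddle. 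This closes the induction and yields item (1).
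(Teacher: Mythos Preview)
Your proof is correct. For item~(2) you do exactly what the paper intends: sum the contributions of Lemma~\ref{lem:sumsk} over $k=0,\ldots,d$.

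For item~(1) your route differs slightly from the paper's. The paper does not argue by induction on $d$; instead, just before the corollary it asserts (``by construction'') that distinct appearances of $S_k$ in $S_d(t)$ are interval-disjoint and that each length-$2^k$ interval is the last interval of a \emph{unique} $S_k$ appearance. Given that bijection, item~(1) is immediate from Lemma~\ref{lem:sumsk}, since the number of entries equal to $2^k$ in $S_d$ is $2^d/2^k=2^{d-k}$. Your inductive argument, and in particular your careful two-case analysis ruling out straddling appearances, is precisely what justifies the paper's ``by construction'' claim. So your approach is more self-contained, while the paper's is shorter but leans on an unproved structural observation; the two are easily seen to be equivalent.
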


The interval sequences defined above suggests a new possible static algorithm. Divide the timeline of each machine into intervals as in $S_\infty(0)$, and let any job that arrives occupy the first unoccupied interval it fits in. Unfortunately, as proved in section~\ref{sec:static_lb}, when the competitive ratio is evaluated as a function of $P_{\max}$ alone, this algorithm is $\Omega \left(\sqrt{P_{\max}}\right)$ competitive, as the natural algorithms in Section~\ref{sec:natural_algs}. (When the competitive ratio may be a function of $P_{\max}$ and $n$, this algorithm is $O\left(\log P_{\max}+\log n\right)$ competitive, see Theorem~\ref{thm:static_cr}).

\begin{definition}
A {\sl state} is a vector of consecutive interval sequences of the form
\begin{eqnarray*} A&=&\langle A_1 , A_2, \cdots , A_\ell\rangle \mbox{\rm\ where} \\
	A_i &=& S_{k_i}(t_i) \mbox{ for every $1\leq i \leq \ell$},\end{eqnarray*} for some $\ell$ (which we refer to as the \emph{length} of $A$) and integers $k_i$ for $1\leq i \leq \ell$, and where $e(A_i)=e(S_{k_i}(t_{i})) \leq t_{i+1}=b(A_{i+1})$ for $1 \leq i \leq \ell-1$. This means that the interval sequences are disjoint and ordered by their starting times. Note that there might be gaps between two consecutive state entries, i.e., $e(A_i) <b(A_{i+1})$ for some $1\leq i \leq \ell-1$.
\end{definition}

\subsection{\texorpdfstring{$O(\log P_{\max})$}{O(log Pmax)} Competitive Dynamic Menu }
\label{subsec:dynamic}

When job $j+1$ arrives the algorithm is in some {\sl configuration} $\psi^j=\left(A^{j},X^{j}\right)$, where $A^{j}$ is some state of length $\ell_{j}$, and $X^{j}$ is the set of intervals occupied by the previous $j$ jobs. State $A^j$ represents every machines' division of $\left[0,\max_{i\in[j]}c_i\right]$ into time intervals (same division for all machines). This division will be kept at any future time. 
For every $i<\ell_{j}$, $A^{j}_i$ is fixed and will be a part of every future state, while $A^{j}_{\ell_{j}}$ might be subject to change. We refer to $A^{j}_{\ell_{j}}$ as the {\sl tentative sequence} of state $A^j$. $X^j$ keeps track of all previously allocated intervals (in all machines): $([b,e],q)\in X^{j}$ means that some job $j'<j$ chose the interval $[b,e]$ on machine $1 \leq q \leq m$. Note that the size of job $j$, $p_j$, might be strictly smaller than the length of the interval ($e-b$), yet it is still considered occupied.

%
\noindent{\bf Generating the Dynamic Menu}

Given a state $A=\left(A_1,A_2,\ldots,A_\ell\right)$ and a time $t$, we define an interval sequence $\tau$ as follows:
\[
\tau(A,t) = \begin{cases}
	A_1\|A_2\|\ldots\|A_{\ell}\|S_\infty(t) & t\geq e\left(A_\ell\right) \\
	A_1\|A_2\|\ldots\|A_{\ell-1}\|S_\infty(b\left(A_\ell\right)) & t< e\left(A_\ell\right)
\end{cases}\]
$\tau$ is used to create the menu presented to a job $j$. We present an algorithm for the creation of the menu, based on the previous configuration $\psi^{j-1}$, and the current time $t$.
\newline
\MyFrame{
\begin{itemize}
	\item Let $\tau^j=\tau\left(A^{j-1},r_j\right)$.
	\item Set $d_1$ to be the length of the first time interval in $\tau^j$ beginning at time  $b_1\geq t$.
	\item Add $([b_1,b_1+d_1], q)$ to the menu for all machines $1\leq q\leq m$ in which $[b_1,b_1+d_1]$ is unoccupied (i.e, $([b_1,b_1+d_1],q)\notin X^{j-1}$).
	\item Set $i=1$
	
	\item Repeat until job $j$ chooses an interval:
	\begin{itemize}
		\item Let $d_{i+1}$ be the length of the first interval longer than $d_i$ in $\tau^j$ that starts at time $b_{i+1}\geq t$ (it follows that $b_{i+1}>b_i$).
			\item Add $([b_{i+1},b_{i+1}+d_{i+1}], q)$ to the menu for all machines $1\leq q\leq m$ in which $[b_{i+1},b_{i+1}+d_{i+1}]$ is unoccupied (i.e., $([b_{i+1},b_{i+1}+d_{i+1}],q)\notin X^{j-1}$).
		\item Set $i=i+1$.
	\end{itemize}
\end{itemize}}
\newline

By construction, no job will ever choose a time interval that starts before the job arrival time, nor will it ever choose a slot that has already been chosen.

A selfish job of length $p_j$ always chooses a menu entry of the form $([b,e],q$) where $b$ is the earliest menu entry with $p_j\leq e-b$.

%
%
%

\noindent{\bf Updating States.}

After job $j$ makes its choice of menu entry,  $(I(j),M(j))$, we update the configuration from $\psi^{j-1}=\left(A^{j-1},X^{j-1}\right)$ to $\psi^{j}=\left(A^{j},X^{j}\right)$. Clearly, $X^{j} = X^{j-1} \cup \left\{(I(j),M(j))\right\}$. In the rest of this section we describe how to compute $A^j$.

Recall that a state is a vector of consecutive and disjoint interval sequences.
Initially, $A^0=\langle\rangle$ with length $\ell_0=0$ and $A^0_{\ell_0}$ is an empty sequence with $b\left(A^0_{\ell_0}\right)=e\left(A^0_{\ell_0}\right)=0$.
$A^{j}$ always contains all of $A^{j-1}$'s interval sequences except possibly the tentative sequence $A^{j-1}_{\ell_{j-1}}$.
When job $j$ of size $2^k$ chooses an interval, the new tentative sequence $A^j_{\ell_j}$ can be one of the following:
\begin{enumerate}
	\item {\sl Unchanged from former:} The new tentative sequence in $A^j$ is the same as the former tentative sequence in $A^{j-1}$, {\sl i.e.}, $A^j_{\ell_j}=A^{j-1}_{\ell_{j-1}}$. This happens when $I(j)\in A^{j-1}$, see entry $1$ in Table~\ref{tab:states_table}.
	\item {\sl Disjoint from former:} The former tentative sequence, $A^{j-1}_{\ell_{j-1}}$ becomes {\sl fixed}, and the  new tentative sequence $A^j_{\ell_j}$ is disjoint from the former. The tentative sequence in $A^{j-1}$, $A^{j-1}_{\ell_{j-1}}$, is the $\ell_{j-1}$th element in all future states $A^i$, for $i\geq j$. See entries $2$ and $3$ in Table~\ref{tab:states_table}.
	\item {\sl Extension of former:} The new tentative sequence is an {\sl extension} of the former tentative sequence. {\sl I.e.}, if $A^{j-1}_{\ell_{j-1}}=S_d(t)$ then $\ell_j=\ell_{j-1}$ and $A^{j}_{\ell_{j}}=S_{k}(t)$. See entry $4$ in Table~\ref{tab:states_table}.
\end{enumerate}


\begin{table}[H]
\centering
\begin{tabu}{|[1pt] c|[1pt] c | c |[2pt] c | c | c |[1pt]}
	\tabucline[1pt]{-}
	 &  $\ell_j$			& $A^j_{\ell_j}$  & $c_j\leq e\left(A^{j-1}_{\ell_{j-1}}\right)$ &	$r_j\geq e\left(A^{j-1}_{\ell_{j-1}}\right)$ &
\begin{tabular}{c} $A_{\ell_{j-1}}^{j-1}=S_d(t)$ \\$k\leq d$\end{tabular} \\ \tabucline[1.5pt]{-}
	1&  $\ell_{j-1}$		& $A^{j-1}_{\ell_{j-1}}$ 			& True & - & - \\ \hline
	2& $\ell_{j-1}+1$ 	& $S_k(r_j)$ 			& False & True & - \\ \hline
	3&  $\ell_{j-1}+1$	& $S_k\left(e\left(A^{j-1}_{\ell_{j-1}}\right)\right)$ 	& False &  False & True \\ \hline
	4&  $\ell_{j-1}$		& $S_k\left(b\left(A^{j-1}_{\ell_{j-1}}\right)\right)$ 	& False& False & False \\
	\tabucline[1pt]{-}

\end{tabu}
\caption{Update rules: After job $j$ makes its choice (and $c_j$ is determined), the new state $A^j$ is a function of (i) $A^{j-1}$, (ii) release time $r_j$, (iii) processing time $p_j=2^k$, and (iv) completion time $c_j$.}
\label{tab:states_table}
\end{table}


Let $A^j_i,A^j_{i+1}$ be two consecutive interval sequences in a state $A^j$. If $b\left(A^j_{i+1}\right)>e\left(A^j_i\right)$, we say the interval $\left[e\left(A^j_i\right), b\left(A^j_{i+1}\right)\right]$ is a {\sl gap}.

Figure~\ref{fig:dyn_example} is an example with 5 jobs that arrive over time, and how the configuration changes over time. The jobs in Figure \ref{fig:dyn_example} illustrate cases 1--4 from Table \ref{tab:states_table} in the following order: case 2 for job 1, case 1 for job 2, case 3 for job 3, case 4 for job 4 and case 2 for job 5.



\begin{figure}
	\begin{center}
	\includegraphics[height=0.87\textheight,keepaspectratio]{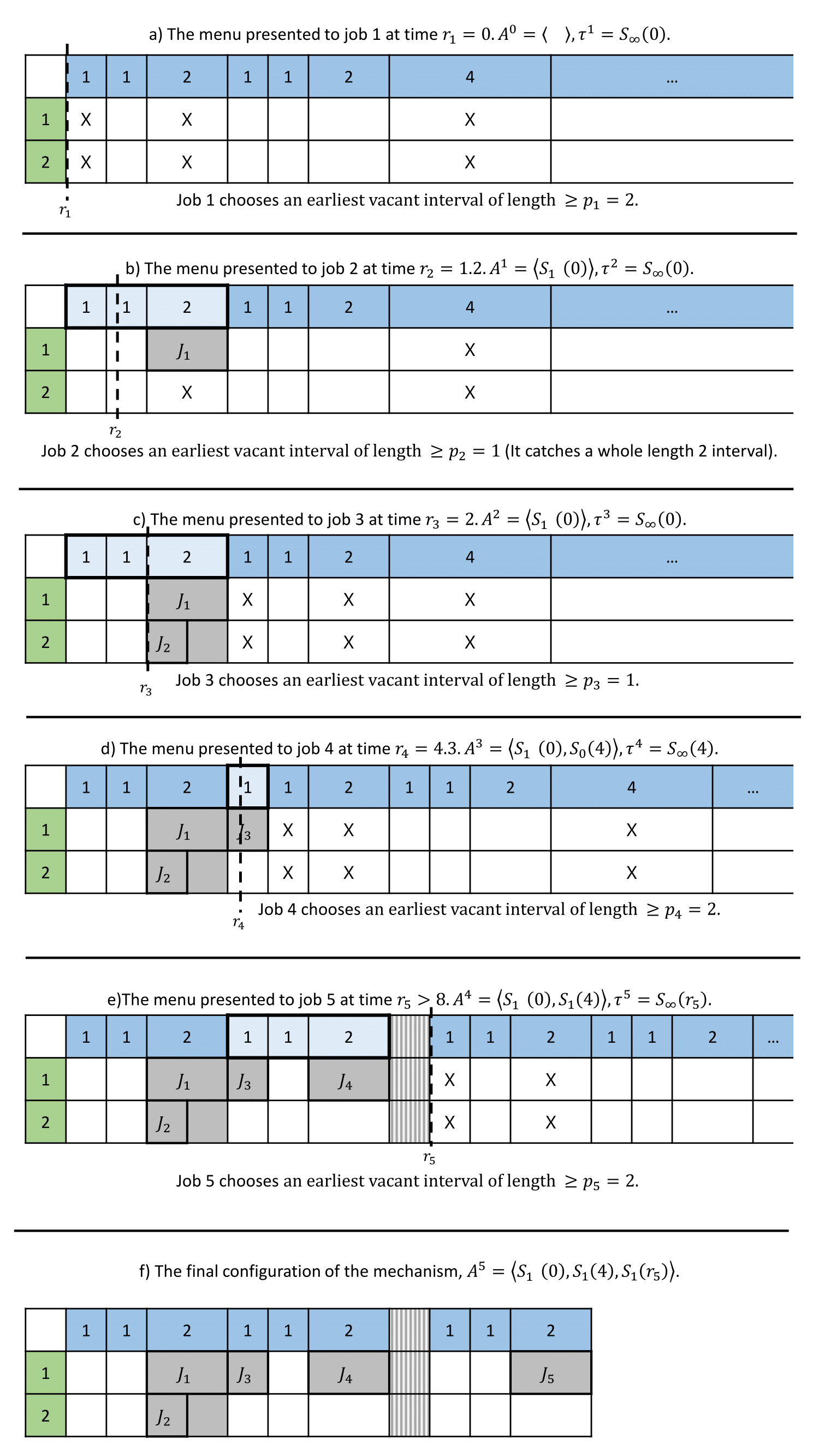}		
	\end{center}
	\caption{Changing Menus of the Dynamic Menu Algorithm, as jobs arrive and make choices. The two bottom rows in the tables represents two machines. An $X$ in a machine cell represents an (interval,machine) entry in the currently presented menu. A dashed line marks the release time of the current job. Gray cells represent choices previously made by jobs. A gap is represented by a rectangle filled with vertical lines. A rectangle outline in the top row of a table represents the tentative sequence before job $j$ makes it choice, i.e., $A^{j-1}_{\ell_{j-1}}$. Note that this example does not make the simplicity assumptions of Section~\ref{subsec:srpt}.}	
	\label{fig:dyn_example}
\end{figure}

Based on the definition of $A^j$ and its update rule, we observe the following.

\begin{observation}\label{obs:leaders} For every $A^j$,
	\begin{enumerate}
		\item If for job $j$, $c_{j}>e\left(A^{j-1}_{\ell_{j-1}}\right)$, then $I(j)$ is the last interval of the (new) tentative sequence $A_{\ell_{j}}^{j}$ which is of length $p_j$ .
		\item For every $A^j_i=S_{k_i}(t_i)$, there exists some job $j'\leq j$ such that $I(j')$ is the last interval in $S_{k_i}(t_i)$,
and $p_{j'}=2^{k_i}$. This means that job $j'$ occupies the entire last interval in $S_{k_i}(t_i)$ on machine $M(j')$.
	\end{enumerate}

\end{observation}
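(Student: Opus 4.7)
The plan is to prove both parts jointly by induction on $j$, with the base case $j = 0$ ($A^0 = \langle\rangle$) trivially vacuous. For the inductive step, assume both parts hold for $\psi^{j-1}$, and consider job $j$ of processing time $p_j = 2^k$, release time $r_j$, and selected menu entry $(I(j), M(j))$. The update rules of Table~\ref{tab:states_table} partition the analysis into four cases, and I would establish Parts~1 and~2 case by case.

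Part~2 would follow essentially from Part~1 together with the induction hypothesis. Each entry $A^j_i$ with $i < \ell_j$ is either inherited unchanged from $A^{j-1}$ (retaining its leader by the IH), or equals the former tentative $A^{j-1}_{\ell_{j-1}}$ that has just been frozen in cases~2 or~3 (with a leader by the IH applied to Part~2 at step $j-1$). For the new tentative $A^j_{\ell_j}$: case~1 leaves it identical to $A^{j-1}_{\ell_{j-1}}$ and the leader is inherited; in cases~2--4, Part~1 identifies $I(j)$ as the last interval of $A^j_{\ell_j}$ with length $p_j = 2^k$, so job $j$ itself serves as the leader.

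For Part~1, I would exploit the menu construction: it enumerates intervals of strictly increasing length from $\tau^j$, each appearing at a strictly increasing start time $b_1 < b_2 < \cdots$. A weight-$1$ selfish job minimizes its cost by minimizing the start time over menu entries of length $\geq p_j$, and thus picks the entry of length exactly $2^k$ provided the first $2^k$-interval of $\tau^j$ at time $\geq r_j$ is unoccupied on some machine. In each relevant case I locate this interval and match it to the last interval of $A^j_{\ell_j}$. For case~2 ($r_j \geq e(A^{j-1}_{\ell_{j-1}})$), the interval lies in $S_\infty(r_j)$ at time $r_j + k \cdot 2^k$, past all prior schedules, and coincides with the last interval of $S_k(r_j)$. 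For case~4 (with $A^{j-1}_{\ell_{j-1}} = S_d(t)$ and $k > d$), no interval of $A^{j-1}_{\ell_{j-1}}$ has length $2^k$, so the first $2^k$-interval in $\tau^j$ sits in $S_\infty(b(A^{j-1}_{\ell_{j-1}}))$ at time $b(A^{j-1}_{\ell_{j-1}}) + k \cdot 2^k$; the arithmetic inequality $k \cdot 2^k \geq (d+1) \cdot 2^{d+1} > (d+1) \cdot 2^d$ (valid for $k \geq d+1$) puts it past $e(A^{j-1}_{\ell_{j-1}})$, and it equals the last interval of $S_k(b(A^{j-1}_{\ell_{j-1}}))$. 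For case~3 ($k \leq d$), the hypothesis $c_j > e(A^{j-1}_{\ell_{j-1}})$ rules out selection of any $2^k$-interval inside $A^{j-1}_{\ell_{j-1}}$; combined with the menu's one-entry-per-length convention, the chosen $2^k$-interval must lie strictly after $A^{j-1}_{\ell_{j-1}}$. Unfolding $S_\infty(b(A^{j-1}_{\ell_{j-1}}))$ past its $S_d$-prefix via the recursion $S_{d+1} = S_d \| S_d \| \langle 2^{d+1}\rangle$, the next $2^k$-interval appears at time $e(A^{j-1}_{\ell_{j-1}}) + k \cdot 2^k$, which is the last interval of $S_k(e(A^{j-1}_{\ell_{j-1}})) = A^j_{\ell_j}$.

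The main obstacle will be case~3: one must rule out that an ``earlier'' menu entry of length $2^k$ is ever offered to job $j$. Inside $A^{j-1}_{\ell_{j-1}}$ this follows by contradiction from $c_j > e(A^{j-1}_{\ell_{j-1}})$---an unoccupied interior $2^k$-interval at time $\geq r_j$ would be the cheapest fit and would force $c_j \leq e(A^{j-1}_{\ell_{j-1}})$, placing us in case~1 instead. Ruling out interfering entries in earlier frozen segments $A^{j-1}_i$ ($i < \ell_{j-1}$) uses the monotonicity of release times together with the Part~2 invariant (by IH) that leaders occupy the closing intervals of those segments. The remaining pieces reduce to straightforward bookkeeping within the $S_\infty$ structure.
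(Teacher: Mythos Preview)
Your approach is essentially the same as the paper's, though you supply considerably more detail. The paper's proof of Part~1 is a one-line assertion that in update cases 2--4 ``the last interval in $A^j_{\ell_j}$ is of size $p_j$ and was chosen by job $j$ \ldots\ (follows from case analysis of the menu presented to job $j$ and its possible choices)'', which is exactly the case analysis you carry out; for Part~2 the paper observes that each $A^j_i$ was the tentative sequence in some earlier state $A^{\tilde{j}}$ and takes $\tilde{j}$ minimal, which is your inductive trace-back to a leader unrolled.
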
	

\begin{proof}
	\begin{enumerate}
		\item $A^j$ must have been updated by one of the entries 2,3 or 4 in Table~\ref{tab:states_table}. In all these cases, the last interval in $A^j_{\ell_j}$ is of size $p_j$ and was chosen by job $j$ on some machine (follows from case analysis of the menu presented to job $j$ and its possible choices).
		\item For all $i$, $A^j_i$ was the tentative sequence in some past state $A^{j'}$ ($j'<j$). Let $\tilde{j}$ be the minimum $j'$ such that $A^j_i$ was the tentative sequence in state $A^{\tilde{j}}$. Then, $p_{\tilde{j}} = 2^{k_i}$.
	\end{enumerate}
\end{proof}

\subsection{Analysis}
\label{sec:analysis}

\subsubsection{Simplifying assumptions on the input sequence}\label{subsec:srpt}

For the purpose of analysis we assume an input sequence with integral release times and processing times that are powers of $2$.
When going from restricted inputs to the original inputs, the optimal preemptive algorithm cost improves by no more than a constant factor, whereas the online mechanism does not increase the sum of completion times.

Moreover, we assume that the input sequence never creates gaps as such gaps leave all machines free in both the online schedule and the optimal preemptive schedule (as a gap created by job $j$, implies jobs $1,\ldots,j-1$ were all fully processed by the online schedule before job j's arrival. Ergo, the optimal preemptive algorithm must also have completed processing jobs $1,\ldots,j-1$ prior to the arrival of job $j$). Therefore, a gap contribute equally to the sum of completion times of the online schedule and of the optimal preemptive schedule. This improves the competitive ratio. Therefore, an adversary generating such a sequence will never introduce gaps, {\sl  i.e.}, for any state $A^j$, $e\left(A^j_i\right)=b\left(A^j_{i+1}\right)$ for every $1\leq i<\ell_j$.

\subsubsection{Comparison to SRPT}

We now turn to analyze the performance of our mechanism. This is done by comparing the completion time of each job in our mechanism and in SRPT.
Let $j$ be a job in the input sequence. We define $D(j)=\left\{j'\leq j | p_{j'}\leq p_j\right\}$ to be the set of all jobs that arrived no later than job $j$ and that are no bigger than it (note $j\in D(j)$). These jobs are all completed no later than job $j$ both in our mechanism and in SRPT, i.e., $c^*_j\geq \frac{1}{m}vol(D(j))$ (where $c^*_j$ is the completion time of job $j$ in SRPT).  Our analysis is based on this set.

We start with a with a few simple properties:
Recall that the last interval in $S_d(t)$ is the only interval of size $2^d$ in the sequence. The following lemma
gives a lower bound on the completion time of a job (in the optimal schedule) that chose the last interval in $S_d(t)$ (under the dynamic menu mechanism).

\begin{lemma}\label{lem:caught_by_smaller}
	Let $d\geq 0$. Let $t$ and $q$ be such that some job $j$ with $r_j\leq t, p_j=2^k$ chose the last interval of $S_d(t)$ on machine $q$ ($2^k\leq 2^d$).
	Let $D(j,q,S_d(t))$ be the set of jobs, completing no later than job $j$ under SRPT, that execute on the same machine as job $j$, and occupy some interval in $S_d(t)$. {\sl I.e.,}  $D(j,q,S_d(t))=D(j)\cap \left\{j'|I(j')\in S_d(t), M(j')=q\right\}$. Note that $j\in D(j,q,S_d(t))$.
	Then, $$vol\left(D(j,S_d(t),q)\right) \geq 2^d.$$
\end{lemma}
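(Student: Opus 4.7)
The proof proceeds by strong induction on $d$. The base case $d=0$ is immediate: $S_0(t)$ is a single interval of length $1$, forcing $p_j = 1$, and $j \in D(j, q, S_0(t))$ alone contributes $2^0 = 1$. For the inductive step, if $k = d$ then $p_j = 2^d$ and $j$ itself contributes the full $2^d$, so assume $k < d$.

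The main structural step is to show that on machine $q$, the last interval of every $S_k$ sub-copy of $S_d(t)$ is already occupied. By Corollary~\ref{cor:len_sk}, $S_d(t)$ contains $2^{d-k}$ pairwise-disjoint copies of $S_k$, call them $S_k(t_1), \ldots, S_k(t_{2^{d-k}})$; their last intervals are exactly the $2^{d-k}$ length-$2^k$ intervals of $S_d(t)$, all strictly preceding $I(j)$. Because $j$ minimizes $b + p_j$ among menu entries with $e-b \geq p_j = 2^k$, any length-$2^k$ interval of $\tau^j$ preceding $I(j)$ that was unoccupied on some machine would have induced a menu entry of length $2^k$ at a strictly smaller $b$, contradicting $j$'s choice of $I(j)$. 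Hence each such interval is occupied on every machine, in particular on $q$.

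For each $r$, let $j_r$ denote the job occupying the last interval of $S_k(t_r)$ on $q$. Then $p_{j_r} \leq 2^k$ (the interval length bound), $j_r < j$ (so $j_r \in D(j)$ and $D(j_r) \subseteq D(j)$), and $r_{j_r} \leq r_j \leq t \leq t_r$ by monotonicity of release times and the containment $S_k(t_r) \subseteq S_d(t)$. Thus the hypotheses of the lemma hold for $(j_r, q, S_k(t_r))$, and the inductive hypothesis at $d' = k$ gives $vol(D(j_r, q, S_k(t_r))) \geq 2^k$. The $2^{d-k}$ sets $D(j_r, q, S_k(t_r))$ are pairwise disjoint (disjoint sub-sequences, disjoint intervals, and hence distinct jobs), each is contained in $D(j, q, S_d(t))$, and $j$ itself occupies $I(j)$, which is disjoint from every $S_k(t_r)$. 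Summing contributions,
\[
vol(D(j, q, S_d(t))) \;\geq\; p_j \;+\; \sum_{r=1}^{2^{d-k}} vol(D(j_r, q, S_k(t_r))) \;\geq\; 2^k + 2^{d-k}\cdot 2^k \;=\; 2^k + 2^d \;\geq\; 2^d.
\]

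The main obstacle is the occupancy claim in the second paragraph: one must carefully argue that the menu's bucket-by-bucket structure over strictly increasing lengths $d_1 < d_2 < \ldots$ cannot ``hide'' an unoccupied length-$2^k$ interval of $\tau^j$ preceding $I(j)$. The key is that the menu exposes the earliest length-$2^k$ interval in $\tau^j$ starting at or after $r_j$, so any earlier-than-$I(j)$ length-$2^k$ interval unoccupied on some machine would offer $j$ a strictly smaller-$b$ fitting option, contradicting the choice of $I(j)$. Once this is in hand, the inductive descent from $d$ directly to $d'=k$ (applied $2^{d-k}$ times) closes the argument cleanly.
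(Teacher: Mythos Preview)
Your proof is correct and follows essentially the same approach as the paper's: strong induction on $d$, the trivial case $k=d$, the observation that all length-$2^k$ intervals in $S_d(t)$ must already be occupied on machine $q$ (else $j$ would have chosen one), and then applying the inductive hypothesis at level $k$ to each of the $2^{d-k}$ copies of $S_k$ inside $S_d(t)$. The only cosmetic differences are that you make the disjointness of the $S_k(t_r)$ explicit and you also add in $p_j$ itself to get the slightly stronger $2^k+2^d$, whereas the paper just sums the $2^{d-k}$ contributions to $2^d$.
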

\begin{proof}
	
	If $p_{j}=2^k=2^d$ then as $j \in D(j,S_d(t),q)$ the claim is clearly true.
	Specifically, the claim is true for $d=0$ since all processing times are $\geq 1$ so $p_{j}=1=2^d$.
	It remains to consider the case where $p_{j}=2^k<2^d$ ({\sl i.e.}, $k<d$).
	
	Proof via induction over $d$. The claim is true for $d=0$ as stated above.
	
	Let $d>0$, and assume the claim is true for all $0 \leq d' <d$.
	
	If $k<d$ then it must be the case that every interval of length $2^k$ in $S_d(t)$ is occupied on machine $q$ (by a job $j'<j$), otherwise job $j$ would have preferred such an interval over its choice. 
	Let $I$ be an interval of length $2^k$ in $S_d(t)$. By our construction, $I$ is the last interval of some (unique) $S_k$ appearance in $S_d(t)$, i.e., $S_k(t')$ for $t'\geq t$ and $e(S_k(t'))< e(S_d(t))$. $(I,q)$ is occupied by some job $j'<j$ (see top row in Figure \ref{fig:lemmacbs_example}). It follows that $r_{j'}\leq r_j\leq t\leq t'$. By the induction hypothesis, $vol(D(j',S_k(t'),q))\geq 2^k$.
	
	It now follows from Lemma~\ref{lem:len_sk}, that $S_k$ appears $2^{d-k}$ times in $S_d(t)$ (irrespective of $t$), so we can conclude that $vol(D(j,S_d(t),q))\geq 2^d$, as desired.
\end{proof}

\begin{corollary}\label{cor:caught_by_smaller_late_release}
For $d\geq 1$, replacing the condition that $r_j\leq t$ in Lemma \ref{lem:caught_by_smaller} above with the condition $r_j\leq e(S_{d-1}(t))$, gives a [weaker] guarantee that   $vol\left(D(j,S_d(t),q)\right) \geq2^{d-1}$.
\end{corollary}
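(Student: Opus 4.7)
The strategy is to adapt the argument of Lemma~\ref{lem:caught_by_smaller} but restrict attention to the ``right half'' of $S_d(t)$. By construction $S_d = S_{d-1}\|S_{d-1}\|\langle 2^d\rangle$, so as an interval sequence $S_d(t)$ decomposes as a first copy $S_{d-1}(t)$, a second copy $S_{d-1}(t^\star)$ with $t^\star := e(S_{d-1}(t))$, and finally a length-$2^d$ interval. The weaker hypothesis $r_j\le t^\star$ is exactly what is needed to make every interval of the second copy $S_{d-1}(t^\star)$ eligible to appear on $j$'s menu, and the plan is to harvest volume only from within this second copy.

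The boundary case $p_j=2^k=2^d$ is immediate since then $j\in D(j,S_d(t),q)$ contributes $2^d\ge 2^{d-1}$ all by itself. Henceforth assume $k<d$. Reproducing the reasoning used in Lemma~\ref{lem:caught_by_smaller}, because $j$ chose the last (length-$2^d$) interval of $S_d(t)$, every length-$2^k$ interval in $S_d(t)$ whose start time is at least $r_j$ must be occupied on machine $q$ by some earlier job (otherwise $j$ would have strictly preferred that earlier, shorter option). In particular, every length-$2^k$ interval lying inside the second copy $S_{d-1}(t^\star)$ is occupied on machine $q$, and by Observation~\ref{obs:leaders}(2) the occupying job has processing time exactly $2^k$.

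By Corollary~\ref{cor:len_sk}(1) the second copy contains $2^{d-1-k}$ disjoint appearances of $S_k$; pick one, call it $S_k(t')$, and let $j'<j$ be the job occupying its last interval on machine $q$. Because jobs are indexed in arrival order, $r_{j'}\le r_j\le t^\star\le t'$, so the hypothesis of Lemma~\ref{lem:caught_by_smaller} holds for $(j',S_k(t'),q)$ and yields $vol(D(j',S_k(t'),q))\ge 2^k$. Moreover $D(j',S_k(t'),q)\subseteq D(j,S_d(t),q)$ (since $j'\le j$, $p_{j'}=2^k\le p_j$, $S_k(t')$ is contained in $S_d(t)$, and the machines match), and the sets arising from different $S_k$ appearances are pairwise disjoint because distinct $S_k$ appearances occupy disjoint time ranges.

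Summing over all $2^{d-1-k}$ appearances therefore gives $vol(D(j,S_d(t),q))\ge 2^{d-1-k}\cdot 2^k = 2^{d-1}$, as required. The main subtlety — and the precise reason the bound degrades from $2^d$ in the lemma to $2^{d-1}$ here — is that the first copy $S_{d-1}(t)$ may lie entirely before $r_j$ and contribute nothing, so all the accounting has to be done inside the second copy; checking that $r_{j'}\le t'$ when invoking Lemma~\ref{lem:caught_by_smaller} is exactly the place where the weaker hypothesis $r_j\le e(S_{d-1}(t))$ is spent.
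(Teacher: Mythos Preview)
Your argument is correct and follows the same overall idea as the paper---restrict attention to the second copy $S_{d-1}(t^\star)$ where $t^\star=e(S_{d-1}(t))$---but the two proofs differ in how Lemma~\ref{lem:caught_by_smaller} is invoked. The paper applies the lemma \emph{once}: it observes that the last (length-$2^{d-1}$) interval of $S_{d-1}(t^\star)$ must be occupied on machine $q$ by some job $j'$ with $r_{j'}\le r_j\le t^\star$ and $p_{j'}\le 2^{d-1}$, and then invokes Lemma~\ref{lem:caught_by_smaller} for the pair $(j',S_{d-1}(t^\star))$ to obtain $2^{d-1}$ directly. You instead work at granularity $2^k$, applying the lemma separately to each of the $2^{d-1-k}$ copies of $S_k$ inside $S_{d-1}(t^\star)$ and summing. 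Your route is essentially one unrolled step of the lemma's own induction; it has the small advantage that the containment $D(j',S_k(t'),q)\subseteq D(j,S_d(t),q)$ is immediate because $p_{j'}\le 2^k=p_j$, whereas the paper's one-shot version needs $p_{j'}\le p_j$ but only records $p_{j'}\le 2^{d-1}$ (harmless in the paper's sole application, where $d\le k$).

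One small correction: your appeal to Observation~\ref{obs:leaders}(2) is misplaced. That observation concerns the last interval of a state component $A^j_i$, not an arbitrary $S_k$ appearance inside some larger $S_d(t)$, so it does not force $p_{j'}=2^k$. Fortunately you do not need equality: any job occupying a length-$2^k$ interval has $p_{j'}\le 2^k$, and Lemma~\ref{lem:caught_by_smaller} (applied with $S_k(t')$ playing the role of ``$S_d(t)$'') already yields $vol(D(j',S_k(t'),q))\ge 2^k$ and the inclusion into $D(j,S_d(t),q)$ from $p_{j'}\le 2^k=p_j$. Simply drop the reference to the observation and the proof stands.
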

\begin{proof}
Let $j$, $d$, and $t$ be as in Lemma \ref{lem:caught_by_smaller}.
	If $p_{j}=2^k=2^d$ then the claim is true. Otherwise, $2^k\leq 2^{d-1}$. Recall that by construction, $S_d(t)=S_{d-1}(t)\|S_{d-1}(t_2)\|\langle I\rangle $ where $t_2=e(S_{d-1}(t))$ and $I=[e(S_{d-1}(t_2)), e(S_{d-1}(t_2))+2^d]$ is a length $2^d$ interval. The last interval in $S_{d-1}(t_2)$ is of length $2^{d-1}$ and must be occupied on every machine when job $j$ arrived (otherwise it would have chose it on some available machine). Thus, it must by occupied by some job $j'$ with $r_{j'}\leq r_{j}\leq e(S_{d-1}(t))=b(S_{d-1}(t_2)$ and $p_{j'}\leq 2^{d-1}$. Applying Lemma~\ref{lem:caught_by_smaller} to job $j'$ and  $S_{d-1}(t_2)$ gives the desired result.
\end{proof}

\begin{figure}
	\fbox{\includegraphics[width=0.9\textwidth,keepaspectratio]{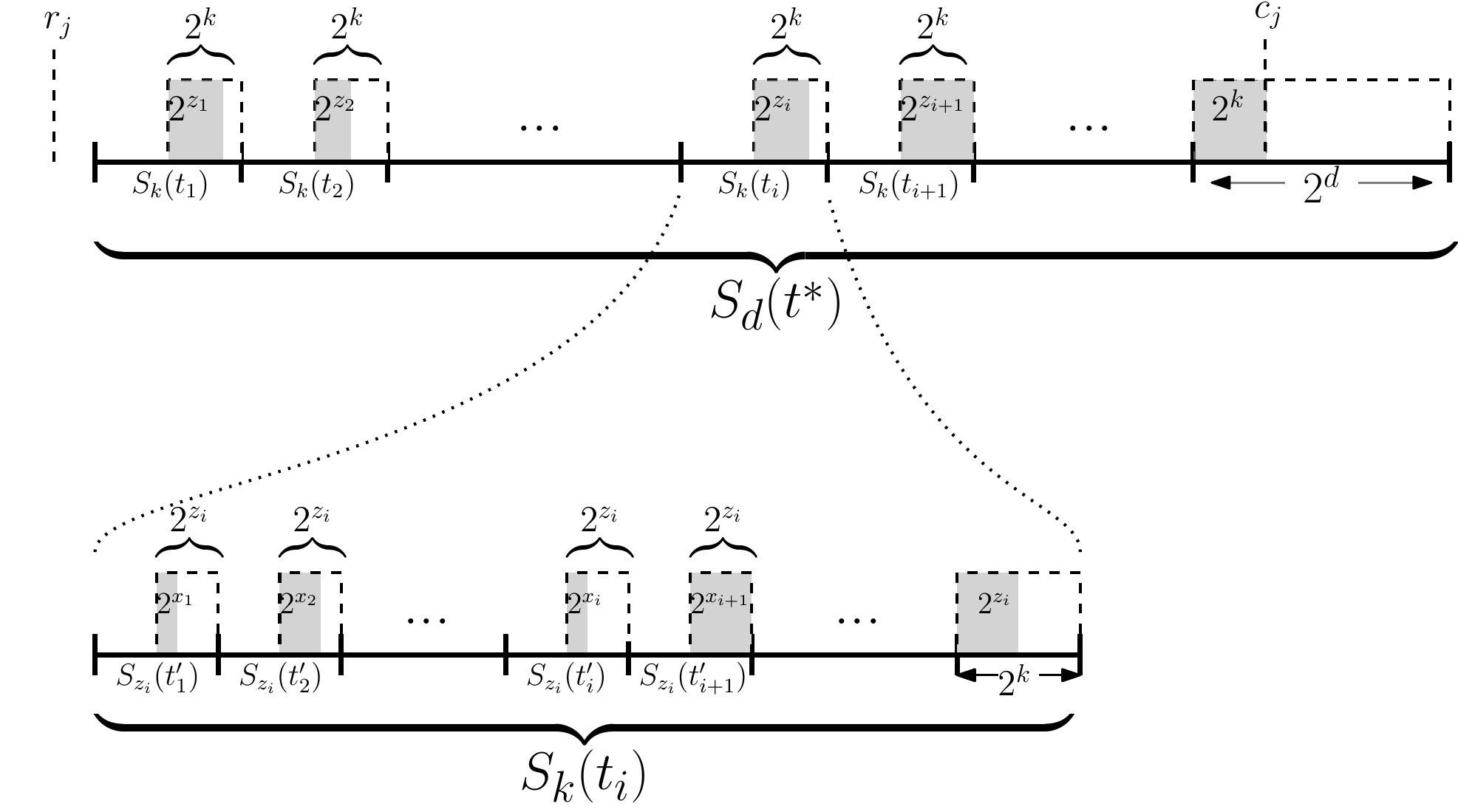}}
	\caption{Illustration for the proof of Lemma \ref{lem:caught_by_smaller}. The upper figure shows the last interval in $S_d(t)$ is occupied by a size $2^k$ job, which implies all $2^k$ intervals in $S_d(t)$ have already been occupied. One of the $S_k$ appearances in $S_d(t)$ is expended in the lower figure, illustrating a structure similar to the upper figure. It follows inductively that the sum of the lengths of jobs in $S_k(t_i)$ of length $\leq 2^k$ is $\geq 2^k$. }
	\label{fig:lemmacbs_example}
\end{figure}

For any job $j$, let $c^*_j$ be the completion time of job $j$ in the SRPT schedule.
Our goal is to show that  $c_j \leq \log P_{\max} \cdot c^*_j$

Consider job  $j$, and the final state $A^n$. Let $a_j$, $\rho_j$ be such that $$b\left(A^n_{a_j}\right) \leq r_j <e\left(A^n_{a_j}\right) \mbox{\rm\ and\ }
b\left(A^n_{a_j+\rho_j}\right) < c_j \leq e\left(A^n_{a_j+\rho_j}\right)\qquad  (\rho_j \geq 0).$$
\begin{figure}
	\fbox{\includegraphics[width=0.9\textwidth,keepaspectratio]{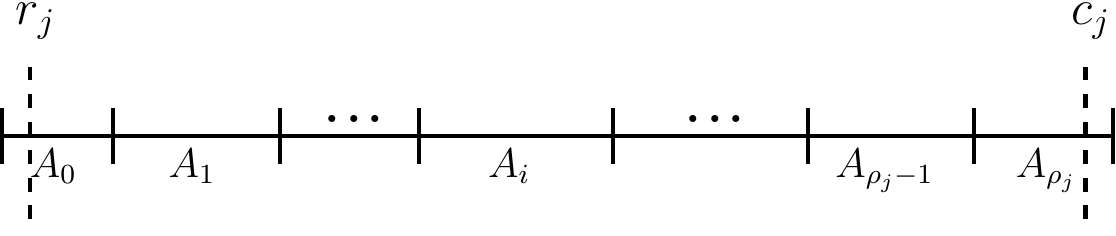}}
	\caption{The timeline division induced by the final state $A^n$.}
	\label{fig:states_example}
\end{figure}

To avoid repeatedly using cumbersome notation, we use the shorthand \[A_0=A^n_{a_j},A_1=A^n_{a_j+1},\ldots, A_{\rho_j}=A^n_{a_j+\rho_j}\] (see Figure~\ref{fig:states_example}).


%

\begin{lemma}\label{lem:all_machines_caught}
	Assume $\rho_j>0$. Let $p_j=2^k$ be the size of job $j$. Let $A_i=S_d(t)$ for some $i<\rho_j$ and some $d,t$. If $r_j\leq b\left(A_i\right)$, then $vol(D(j)\cap \left\{j'|I(j')\in A_i\right\})\geq m\cdot 2^d$, i.e., the volume of jobs $j'\leq j$ of size $p_{j'} \leq 2^k$ and for which $I(j')\in A_i$ is at least $m\cdot 2^d$.
\end{lemma}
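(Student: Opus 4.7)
The plan is a per-machine reduction to Lemma~\ref{lem:caught_by_smaller}. Specifically, I will argue that on every machine $q$,
\[
vol\bigl(D(j)\cap\{j' : I(j')\in A_i,\ M(j')=q\}\bigr) \;\ge\; 2^d,
\]
and then sum over the $m$ machines to obtain the claimed $m\cdot 2^d$ bound.

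To get the per-machine bound on $q$, it suffices to exhibit a job $\tilde{j}_q\in D(j)$ that chose the last interval of $A_i=S_d(t)$ on $q$: applying Lemma~\ref{lem:caught_by_smaller} to $\tilde{j}_q$ (using $r_{\tilde{j}_q}\le t$ and $p_{\tilde{j}_q}\le 2^k\le 2^d$) yields $vol(D(\tilde{j}_q,S_d(t),q))\ge 2^d$, and since $D(\tilde{j}_q)\subseteq D(j)$ we are done on $q$. A short case analysis on which update rule of Table~\ref{tab:states_table} is triggered by $j$ (the cases that allow $\rho_j>0$) shows that $A_i$ is already a fixed component of $A^{j-1}$, so every interval of $A_i$ is part of $\tau^j$ when $j$'s menu is generated.

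To produce $\tilde{j}_q$, I would invoke $j$'s rationality. Job $j$ chose a starting time $s_j\ge b(A_{\rho_j})\ge e(A_i)$, whereas the last interval of $A_i$ starts at $e(A_i)-2^d<s_j$. So $j$'s menu cannot have offered $j$ the last interval of $A_i$ on any machine at a time when it was unoccupied, for otherwise picking it would strictly undercut $j$'s actual cost $s_j+p_j$. Since the menu offers, for each distinct length appearing in $\tau^j$ beginning no earlier than $r_j$, the earliest such interval on every unoccupied machine, a short structural check (using that $A_i$ sits as a fixed $S_d$-block in $A^{j-1}$ and that $r_j\le t$) lets us promote the last interval of $A_i$ to such a ``first of its length'' candidate, and conclude that it is occupied on every machine $q$. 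Combined with Observation~\ref{obs:leaders} at each level of the $S_d$ hierarchy—which guarantees a distinguished ``leader'' of each sub-$S_\ell$ block—one can ensure the occupant $\tilde{j}_q$ has size at most $2^k$, so that $\tilde{j}_q\in D(j)$.

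The main obstacle will be exactly this last step: the menu only presents the first interval of each length, so the naive bypass argument does not immediately control every length-$2^d$ interval of $A_i$, nor does it immediately guarantee that the occupant has size $\le 2^k$ (as opposed to some size in $(2^k,2^d]$). Resolving this requires peeling $A_i=S_{d-1}(t)\|S_{d-1}(t_2)\|\langle I^*\rangle$ level by level: on each machine and each sub-$S_{d'}$ block, the argument must identify a leader of size $2^{d'}\le 2^k$ via Observation~\ref{obs:leaders}, and invoke Lemma~\ref{lem:caught_by_smaller} (or the late-release variant Corollary~\ref{cor:caught_by_smaller_late_release}) at the appropriate scale so that the per-level contributions aggregate—via Lemma~\ref{lem:sumsk}—to the required $2^d$ volume on every machine.
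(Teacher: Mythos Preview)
Your per-machine reduction via Lemma~\ref{lem:caught_by_smaller} is the right endgame, and for the regime $d\ge k$ your ``peeling down to the $S_k$ level'' is essentially what the paper does: every length-$2^k$ interval in $S_d(t)$ must be occupied on every machine (else $j$ would have taken it), each such interval is the last interval of a unique $S_k$ appearance, and Lemma~\ref{lem:caught_by_smaller} plus Corollary~\ref{cor:len_sk} finishes.

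The genuine gap is the case $d<k$. Here $p_j=2^k$ exceeds every interval length in $A_i=S_d(t)$, so job $j$ cannot fit in \emph{any} interval of $A_i$. Consequently $j$'s rationality tells you nothing whatsoever about which intervals of $A_i$ are occupied, and your bypass argument (``$j$ would have undercut its cost'') collapses entirely. Peeling does not rescue you: every sub-$S_{d'}$ block of $A_i$ has $d'\le d<k$, so $j$ still does not fit at any level. Your appeal to Observation~\ref{obs:leaders} also misfires---that observation concerns components $A^j_i$ of the \emph{state}, not arbitrary sub-$S_\ell$ blocks inside one such component, and in any case it only furnishes an occupant on \emph{one} machine, not on all $m$.

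The missing idea is to replace $j$ by a surrogate: let $j'$ be the minimal index with $c_{j'}>e(A_i)$. Then $j'\le j$ and $r_{j'}\le r_j\le t$. Crucially, the update rules (Table~\ref{tab:states_table}) force $p_{j'}=2^z$ with $z\le d$; otherwise $j'$ would have \emph{extended} the tentative sequence rather than fixed $A_i$ at $S_d(t)$. Now $j'$ does fit in the last interval of $A_i$, so \emph{its} rationality forces that interval to be occupied on every machine by jobs of size $\le 2^d\le 2^k$ arriving before $j'$, and Lemma~\ref{lem:caught_by_smaller} applies per machine with $D(j')\subseteq D(j)$. This two-case split ($d\ge k$ via $j$; $d<k$ via the surrogate $j'$) is exactly what the paper's proof does.
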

\begin{proof}
	We separate the proof into two cases:
	\begin{itemize}
		\item 	If $d\geq k$, then, on all machines, the last interval of every $S_k$ appearance in $S_d(t)$ is occupied by a job that arrived before job $j$ (by a job of size $\leq 2^k$). Otherwise, job $j$ would have chosen such an unoccupied interval. By using Lemma~\ref{lem:caught_by_smaller} on every one of the $S_k$ appearances (on every machine, separately) 
		we get that for every such $S_k(t')$ appearance in $S_d(t)$ (on every machine), $vol(D(j)\cap \left\{j'|I(j')\in S_k(t') \right\})\geq 2^k$, i.e., every such appearance has a volume of at least $2^k$ of jobs in $\left\{ j'\leq j| p_{j'}\leq 2^k, I(j') \in S_k(t') \right\}$. 
		Taken together with Corollary~\ref{cor:len_sk} this implies that $vol(D(j)\cap \left\{j'|I(j')\in S_d(t) \right\})\geq m\cdot 2^d$.
		\item 	Otherwise ($d<k$),  consider the minimal index $j'$ with $c_{j'}>e\left(A_i\right)$; this means that the tentative sequence $A^{j'}_{\ell_{j'}}$ is disjoint from the tentative sequence $A^{j'-1}_{\ell_{j'-1}}$. (If $A^{j'}_{\ell_{j'}}$ was unchanged, or an extension of $A^{j'-1}_{\ell_{j'-1}}$ it contradicts the assumption that  $c_{j'}> e\left(A_i\right)$.

		Since $c_j>e\left(A_i\right)$, $j'$ arrived no later than job $j$ and $r_{j'}\leq r_j$.
		Let $p_{j'}=2^z$.	It must be the case that $z\leq d$, otherwise, $A^{j'}_{\ell_{j'}}$ would be an extension of $A^{j'-1}_{\ell_{j'-1}}$.

		
		Therefore, since $j'$ can fit in the last interval of $A_i=S_d(t)$ and its release time is no later than $b\left(A_i\right)$, it must be the case that this interval is occupied on every machine, by some job of size $\leq 2^d\leq 2^k$ that arrived \textit{before} $j'$. By applying
		Lemma~\ref{lem:caught_by_smaller}, 
		we get that in every machine, there are jobs of size $\leq 2^k$ in $A_i$ that arrived no later than $j'$ of volume at least $2^d$.
		we get that in every machine, $vol(D(j')\cap \left\{\tilde{j}|I(\tilde{j})\in A_i\right\})\geq 2^d$, i.e., there are jobs of size $\leq p{j'}$ in $A_i$ that arrived no later than $j'$ of volume at least $2^d$.
		Since $j'$ arrived no later than $j$, and $p_{j'}\leq p_j$, the lemma follows.
	\end{itemize}
\end{proof}

\begin{figure}
	\fbox{\includegraphics[width=0.9\textwidth,keepaspectratio]{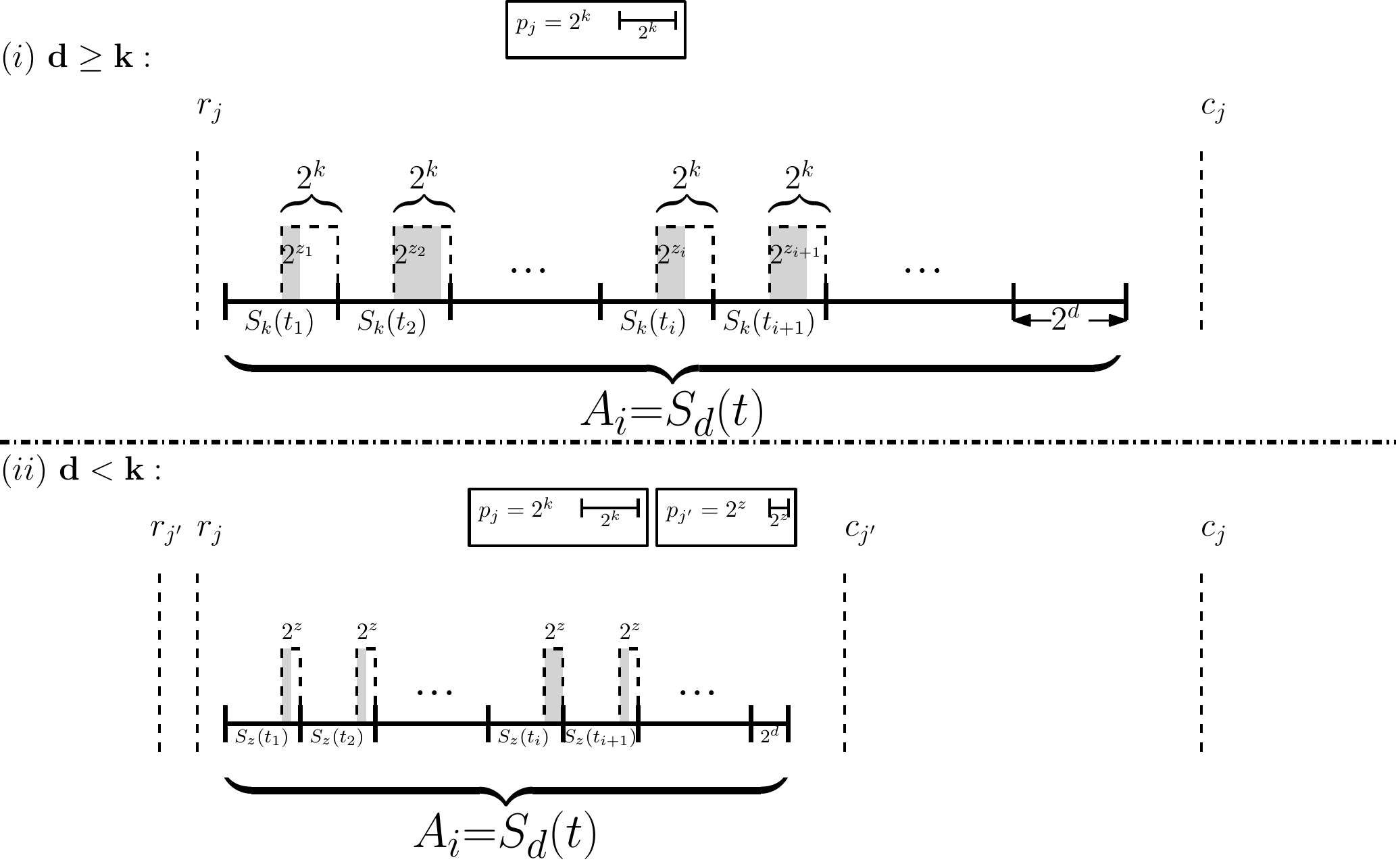}}
	\caption{Illustration for the proof of Lemma \ref{lem:all_machines_caught}. Job $j$ is size $p_j=2^k$, $r_j\leq b\left(A_i\right)$ and $c_j>e\left(A_i\right)$. If $d\geq k$ (see upper figure) then, on all machines, all last intervals in $S_k$ appearances in $S_d(t)$ are occupied.  Lemma~\ref{lem:caught_by_smaller} implies that every such appearance contains a set of small jobs of large volume ($\geq 2^k$). If $d<k$ (the lower figure) then $j'$ is a job if size $p_{j'}=2^z$, $r_{j'}\leq b\left(A_i\right)$ and $c_{j'}>e\left(A_i\right)$, the same arguments for job $j'$ give the desired result.}
	\label{fig:lemmacbs_example}
\end{figure}

The next lemma handles the case where job $j$ arrives sufficiently early in $A_0$.

\begin{lemma}\label{lem:all_machines_almost_caught}
	Let $\rho_j >0$, $p_j=2^k$ be the size of job $j$ and let $A_0=S_d(t)$. If $0<d\leq k$ and $r_j\leq e(S_{d-1}(t))$, then 
	$vol(D(j)\cap \left\{j'|I(j')\in A_0 \right\})\geq m\cdot 2^{d-1}$.
\end{lemma}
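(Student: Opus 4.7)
The plan is to adapt the proof of Lemma~\ref{lem:all_machines_caught} in its ``case $d<k$'' regime to the second half of $A_0$. Write $A_0 = S_d(t) = S_{d-1}(t)\,\|\,S_{d-1}(t_2)\,\|\,\langle[t_3,t_3+2^d]\rangle$, where $t_2 = e(S_{d-1}(t))$ and $t_3 = e(S_{d-1}(t_2))$. The hypothesis $r_j \le e(S_{d-1}(t)) = t_2$ is precisely the analogue, for the sub-sequence $S_{d-1}(t_2)$, of the hypothesis ``$r_j \le b(A_i)$'' that drives Lemma~\ref{lem:all_machines_caught}, and $d \le k$ yields $d-1 < k$, placing us in that lemma's ``small-index'' regime with $d$ replaced by $d-1$.

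Concretely, I would pick $j'$ as the minimal index with $c_{j'} > t_3$. Since $\rho_j > 0$ forces $c_j > e(A_0) > t_3$, such $j'$ exists and $j' \le j$; in particular $r_{j'} \le r_j \le t_2$. By minimality the tentative sequence $A^{j'-1}_{\ell_{j'-1}}$ ends at time at most $t_3$, and $j'$'s update is one of cases~2, 3, or~4 in Table~\ref{tab:states_table}.

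The next step is to show $p_{j'} \le 2^{d-1}$. Writing $p_{j'} = 2^z$, I would combine the update rules with the fact that the final state fixes $A_0 = S_d(t)$ at position $t$: a case-4 extension with $z \ge d$ would leave, in the final state, an entry $S_{z'}(t)$ with $z' \ge z \ge d$ at position $t$, contradicting $A_0 = S_d(t)$; and a case-2 or case-3 step that creates a fresh tentative entry at or past $e(A_0)$ is ruled out by $r_{j'} \le t_2 < e(A_0)$. By Observation~\ref{obs:leaders}(1), $I(j')$ is then the last interval of the new tentative, of length $2^z$, so $p_{j'} = 2^z \le 2^{d-1}$.

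With $p_{j'} \le 2^{d-1}$, job $j'$ could have been scheduled in the last interval $[t_3 - 2^{d-1}, t_3]$ of $S_{d-1}(t_2)$, and this interval is within $j'$'s menu reach because $r_{j'} \le t_2 = b(S_{d-1}(t_2))$. Because $c_{j'} > t_3$, $j'$ did not pick it, so the interval must be occupied on every machine by a job that arrived no later than $j'$ (hence in $D(j') \subseteq D(j)$) of size at most $2^{d-1}$. For each machine $q$, the occupant's release time is at most $t_2$ (release times are non-decreasing in index), so Lemma~\ref{lem:caught_by_smaller} applied to $S_{d-1}(t_2)$ yields at least $2^{d-1}$ volume of jobs in $D(j) \cap \{\tilde{j} \mid I(\tilde{j}) \in S_{d-1}(t_2),\, M(\tilde{j}) = q\}$. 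Summing over the $m$ machines and using $S_{d-1}(t_2) \subseteq A_0$ gives the desired bound $m \cdot 2^{d-1}$.

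The main obstacle is pinning down $p_{j'} \le 2^{d-1}$; the borderline scenario in which $j'$ itself would establish $A_0$ by choosing $[t_3, t_3 + 2^d]$ (i.e.\ $p_{j'} = 2^d$) needs careful treatment, and may demand a separate argument that exploits $\rho_j > 0$ together with the structure of $j$'s own menu to force the needed occupations on all $m$ machines.
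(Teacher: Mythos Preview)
Your overall structure mirrors the paper's second case of Lemma~\ref{lem:all_machines_caught}, but your choice of $j'$ creates the very obstacle you flag at the end, and that obstacle is real. With $j'$ defined as the minimal index with $c_{j'}>t_3$, the claim $p_{j'}\le 2^{d-1}$ can fail: the job that first extends (or creates) the tentative sequence to $S_d(t)$ has size exactly $2^d$, occupies $[t_3,t_3+2^d]$, and satisfies $c_{j'}=e(A_0)>t_3$. In your case analysis, a case-4 extension with $z=d$ yields $S_d(t)$ itself and does \emph{not} contradict $A_0=S_d(t)$; and case~2 is not excluded either, since $e(A^{j'-1}_{\ell_{j'-1}})$ can lie at or below $t_2\ge r_{j'}$. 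So the step ``show $p_{j'}\le 2^{d-1}$'' is not provable as stated.

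The paper sidesteps this by shifting the threshold: take $j'$ minimal with $c_{j'}>e(A_0)$. Then $A_0$ was the tentative sequence just before $j'$ and becomes \emph{fixed} by $j'$, which forces only the weaker bound $p_{j'}=2^z$ with $z\le d$ (else $j'$ would extend $A_0$ via case~4). Now $j'$ fits in the last interval $[t_3,t_3+2^d]$ of $A_0$, has $r_{j'}\le r_j\le e(S_{d-1}(t))$, yet $c_{j'}>e(A_0)$; hence that interval is occupied on \emph{every} machine by some job of size $\le 2^d\le 2^k$ arriving before $j'$. One then invokes Corollary~\ref{cor:caught_by_smaller_late_release} (the ``late release'' variant of Lemma~\ref{lem:caught_by_smaller}) on $S_d(t)$ per machine to get volume $\ge 2^{d-1}$ each, and summing gives $m\cdot 2^{d-1}$. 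In short: move the cutoff for $j'$ from $t_3$ to $e(A_0)$, accept $z\le d$ rather than $z\le d-1$, and use Corollary~\ref{cor:caught_by_smaller_late_release} in place of Lemma~\ref{lem:caught_by_smaller}.
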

\begin{proof}
	Consider the minimal index $j'$ with $c_{j'}>e\left(A_0\right)$; that is, $A_0$ was the tentative sequence before job $j'$ arrived, and became fixed afterwards (by Observation~\ref{obs:leaders}). Since $c_j>e\left(A_0\right)$, $j'$ arrived no later than job $j$ and $r_{j'}\leq r_j$. Let $p_{j'}=2^z$. It must be the case that $z\leq d$, otherwise, $A_0$ would have not been fixed by job $j'$, but would have been extended to an $S_z(t)$ sequence (and possibly an $S_y(t)$ sequence for some $y>z$ in a later stage).
	
	Therefore, since $j'$ can fit in the last interval of $A_i=S_d(t)$ and its release time is no later than $e(S_{d-1}(t))$, it must be the case that this interval is taken \textit{in every machine}, and it must be occupied by some job of size $\leq 2^d\leq 2^k$ that arrived \textit{before} $j'$. By applying Corollary~\ref{cor:caught_by_smaller_late_release} we get that for every machine $1\leq q\leq m$, $vol\left(D(j')\cap \left\{\tilde{j}|I(\tilde{j})\in S_d(t), M(\tilde{j})=q \right\}\right)\geq 2^{d-1}$, and the lemma follows.

%
\end{proof}

Let $c^*_j$ be the completion time of job $j$ in SRPT.
\begin{lemma}\label{lem:dyn_proof1}
	For every $j$ such that $\rho_j =0$ we have that $c_j=O(\log P_{\max})\cdot c^*_j$.
\end{lemma}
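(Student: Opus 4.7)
The plan is to prove $c_j = O(\log P_{\max})\cdot c^*_j$ by bounding $c_j$ via the fact that $\rho_j=0$ places $I(j)$ inside $A_0$, and lower-bounding $c^*_j$ either trivially or by a volume argument. Write $A_0 = S_d(t)$ and $p_j = 2^k$ (so $k \leq d$). Since $I(j) \subseteq A_0$, we have $c_j = b(I(j)) + p_j \leq e(A_0) = t + (d+1)2^d$, and $r_j \geq t$ by the definition of $a_j$. Set $L = b(I(j)) - r_j$, so $c_j = r_j + L + p_j$, and split on the size of $L$.

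In the easy case $L \leq (d+1)p_j$, the trivial release-time bound $c^*_j \geq r_j + p_j$ yields $c_j \leq r_j + (d+2)p_j \leq (d+2)(r_j + p_j) \leq (d+2)\,c^*_j = O(\log P_{\max})\,c^*_j$.

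The harder case is $L > (d+1)p_j$, where I would invoke a volume argument. By the menu construction, $I(j)$ is the earliest length-$\geq p_j$ interval in $\tau^j$ starting at or after $r_j$ that is unoccupied on some machine; hence, on every machine, every length-$p_j$ interval in $\tau^j$ starting in $[r_j, b(I(j)))$ is occupied. Each such length-$p_j$ interval is the last interval of some $S_k(t')$ appearance inside $A_0 = S_d(t)$; applying Lemma~\ref{lem:caught_by_smaller} (or Corollary~\ref{cor:caught_by_smaller_late_release} for $S_k(t')$ appearances that straddle $r_j$) to its occupant on each machine yields at least $p_j$ volume of $D(j)$-jobs per machine within that $S_k(t')$ block. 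Since by Lemma~\ref{lem:sumsk} length-$p_j$ intervals account for a $1/(d+1)$ fraction of the length of $S_d(t)$, the window $[r_j, b(I(j)))$ contains $\Omega(L/((d+1)p_j))$ such appearances, so $vol(D(j)) \geq m \cdot \Omega(L/(d+1))$, whence $c^*_j \geq \Omega(L/(d+1))$ and therefore $c_j = r_j + L + p_j \leq c^*_j + O((d+1)\,c^*_j) = O(\log P_{\max})\,c^*_j$.

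The main obstacle is the density accounting in the harder case: length-$p_j$ intervals in $S_d(t)$ are not uniformly spaced---maximum gaps can reach $\Theta(2^d)$---so showing that a window of length $L > (d+1)p_j$ contains $\Omega(L/((d+1)p_j))$ such appearances requires careful sub-casing on $L$ relative to the maximum spacing, with Corollary~\ref{cor:caught_by_smaller_late_release} handling the boundary effect where the $S_k$ appearance containing the first length-$p_j$ interval of the window starts slightly before $r_j$.
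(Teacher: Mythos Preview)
Your easy case is correct (it uses $d \leq \log P_{\max}$, which follows from Observation~\ref{obs:leaders}: the last interval of $A_0 = S_d(t)$ has length $2^d$ and is occupied by a job of exactly that size).

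The hard case has a genuine gap, and the obstacle you flag is not resolvable by ``careful sub-casing''. The density assertion --- that a window of length $L$ inside $S_d(t)$ must contain $\Omega\bigl(L/((d+1)p_j)\bigr)$ length-$p_j$ intervals --- is simply false. Take $p_j = 1$ and place $r_j$ just after the last length-$1$ interval of the first $S_{d-1}$ copy inside $S_d(t)$. The stretch from $r_j$ to $e(S_{d-1}(t))$ consists solely of the intervals of lengths $2, 4, \ldots, 2^{d-1}$, has total length $2^d - 2$, and contains \emph{zero} length-$1$ intervals; so a window can have $L \approx 2^d \gg (d+1)p_j$ and still yield nothing from the volume argument. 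Corollary~\ref{cor:caught_by_smaller_late_release} does not help here: it rescues a single $S_k$ block that straddles $r_j$, not a window that contains none at all. In such a situation the needed lower bound on $c^*_j$ comes from the trivial $c^*_j \geq r_j + p_j$ (here $r_j$ is already $\Theta(d \cdot 2^{d-1})$), not from volume --- and your split on $L$ versus $(d+1)p_j$ does not isolate that regime.

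The paper avoids the arbitrary-window density problem altogether by working with the nested prefix structure $S_0(t) \subset S_1(t) \subset \cdots \subset S_d(t)$ rather than with $[r_j, b(I(j)))$. It fixes $a$ with $e(S_a(t)) \leq r_j + p_j < e(S_{a+1}(t))$ and $x$ with $e(S_{a+x}(t)) < c_j \leq e(S_{a+x+1}(t))$. When $x \leq 1$ the ratio $c_j/(r_j + p_j)$ is bounded by an absolute constant (the counterexample above lands exactly in this case). When $x \geq 2$ it counts $S_k$ appearances in $S_{a+x}(t) \setminus S_{a+1}(t)$ \emph{exactly} via Corollary~\ref{cor:len_sk}, obtaining at least $2^{a+x-k-1}$ of them on each machine, hence $c^*_j \geq 2^{a+x-1}$; together with $c^*_j \geq b(A_0)$ and $a+x+1 \leq d \leq \log P_{\max}$ this closes the bound. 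The crucial point is that counting $S_k$ appearances in a prefix-difference $S_{a+x}(t) \setminus S_{a+1}(t)$ is exact, whereas counting them in an arbitrary sub-window is not --- that is the structural idea your proposal is missing.
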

\begin{proof}
	$\rho_j=0$ means that $b\left(A_0\right)\leq r_j <c_j \leq e\left(A_0\right)$.
	Let $A_0=S_d(t)$, as $j$ fits in this sequence, $k\leq d$. We consider the following cases.
	\begin{enumerate}[a.]
		\item If $j$ occupies the first interval of length $2^k$ in $S_d(t)$, then it is scheduled in $S_k(t)$, and by Corollary~\ref{cor:len_sk} $c_j \leq b\left(A_0\right)+(k+1)2^k$ and $c^*_j\geq b\left(A_0\right)+2^k$. Thus, $$c_j\leq (k+1)c^*_j\leq (\log P_{\max}+1)c^*_j$$ as desired.
		\item Otherwise, let $a$ be such that $e(S_a(t))\leq r_j+p_j< e(S_{a+1}(t))$ (note that $r_j+p_j \geq 1$ implies that there exists such $a$). If $c_j\leq e(S_{a+2}(t))$ then, by Corollary~\ref{cor:len_sk}, $c_j \leq b\left(A_0\right)+(a+3)2^{a+2}$ and $c^*_j\geq b\left(A_0\right)+(a+1)2^a$. We get that in this case
		$$c_j\leq \frac{4(a+3)}{a+1}\cdot c^*_j\leq 12 c^*_j.$$
		
		\item 	Finally, consider the case where $e(S_{a+x}(t))<c_j\leq e(S_{a+x+1}(t))$ for some $x>1$. When $j$ arrives, for all machines $1 \leq q \leq m$ and for all intervals $I$, of length $\geq 2^k$, where $I\in S_{a+x}(t)\setminus S_{a+1}(t)$, $(I, q)$ is occupied (otherwise, $j$ would have preferred such $(I, q)$ over his choice). Recall that by Corollary~\ref{cor:len_sk}, $S_k$ appears $2^{a+x-k}$ times in $S_{a+x}(t)$ , and $S_k$ appears at most $2^{a+1-k}$ times in $S_{a+1}(t)$ (only if $a+1\geq k$). Therefore, $S_k$ appears at least $2^{a+x-k}-2^{a+1-k}\geq 2^{a+x-k-1}$ times in $S_{a+x}(t)\setminus S_{a+1}(t)$ \textit{on each machine}.
		
		For each machine $q$ and each such $S_k$ appearance, we apply Lemma \ref{lem:caught_by_smaller}. We deduce that each such appearance, $S_k(t')$ for some $t'$,  contains a $2^k$ volume of jobs in $\left\{j'\leq j|p_{j'}\leq 2^k, I(j')\in S_k(t')\right\}$. 
		Thus, the total volume of jobs that must be processed before $j$ on SRPT is at least $m\cdot 2^{a+x-1}$, and therefore, $c^*_j\geq 2^{a+x-1}$. Also, it must be the case that $c^*_j\geq b\left(A_0\right)$ and therefore $c^*_j\geq \max\left\{b\left(A_0\right),2^{a+x-1}\right\}$.
		
		On the other hand, by Corollary~\ref{cor:len_sk}, $$c_j\leq e(S_{a+x+1}(t)) = b\left(A_0\right)+(a+x+2)2^{a+x+1}\leq \max\left\{b\left(A_0\right),2^{a+x-1}\right\}(5+4(a+x+1)).$$
		By Observation~\ref{obs:leaders}, the last interval in $A_0$ is taken by a job of size $2^d\geq 2^{a+x+1}$, and therefore, $\log P_{\max}\geq a+x+1$. We get that
		$$c_j\leq (5+4\log P_{\max})\cdot c^*_j$$
		 as desired. 
	\end{enumerate}
\end{proof}

\begin{lemma}\label{lem:dyn_proof2}
	If $\rho_j\geq 1$, then for every $j$, $c_j=O(\log P_{\max})\cdot c^*_j$.
\end{lemma}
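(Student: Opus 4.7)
The plan is to split $c_j \leq b(A_{\rho_j}) + (c_j - b(A_{\rho_j}))$ and bound each piece in terms of $c^*_j$. Writing $L_i := (d_i+1) 2^{d_i}$ for the length of $A_i = S_{d_i}(t_i)$ (Corollary~\ref{cor:len_sk}), Observation~\ref{obs:leaders} ensures $d_i \leq \log P_{\max}$, so $L_i \leq (\log P_{\max}+1) \cdot 2^{d_i}$.

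For the first piece, the no-gaps assumption of Section~\ref{subsec:srpt} gives $b(A_{\rho_j}) = b(A_0) + \sum_{i=0}^{\rho_j-1} L_i$. Trivially $b(A_0) \leq r_j \leq c^*_j$. For each intermediate $i \in \{1, \ldots, \rho_j-1\}$, the hypothesis $r_j < e(A_0) \leq b(A_i)$ lets me invoke Lemma~\ref{lem:all_machines_caught}, producing volume $\geq m \cdot 2^{d_i}$ of $D(j)$-jobs in $A_i$; summing and dividing by $m$ yields $c^*_j \geq \sum_{i=1}^{\rho_j-1} 2^{d_i}$ and hence $\sum_{i=1}^{\rho_j-1} L_i \leq (\log P_{\max}+1)\, c^*_j$. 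The $L_0$ term requires a case split on $r_j$'s location inside $A_0 = S_{d_0}(t_0)$: if $r_j \leq e(S_{d_0-1}(t_0))$ and $d_0 \leq k$, Lemma~\ref{lem:all_machines_almost_caught} yields $c^*_j \geq 2^{d_0-1}$; if $r_j > e(S_{d_0-1}(t_0))$, the trivial bound $c^*_j \geq r_j - b(A_0) \geq d_0 \cdot 2^{d_0-1}$ suffices; and the leftover regime $d_0 > k$ forces, since $\rho_j \geq 1$, every length-$\geq p_j$ interval of $A_0$ to be occupied on every machine, so applying Lemma~\ref{lem:caught_by_smaller} at the $S_k$ sub-appearance level delivers the matching volume lower bound. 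In each case $L_0 = O(\log P_{\max})\, c^*_j$, and so $b(A_{\rho_j}) = O(\log P_{\max})\, c^*_j$.

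For the second piece, the crucial observation is that $r_j < b(A_{\rho_j})$ makes every interval of $A_{\rho_j}$ a feasible menu option for $j$ upon arrival, so $j$'s landing position inside $A_{\rho_j}$ is governed by exactly the mechanism already analysed in Lemma~\ref{lem:dyn_proof1}. I would therefore replay its three-case analysis with $A_{\rho_j}$ playing the role of $A_0$ and $b(A_{\rho_j})$ acting as the effective release time inside the segment: (a) if $j$ occupies the first length-$p_j$ interval of $A_{\rho_j}$, then $c_j - b(A_{\rho_j}) \leq (k+1) 2^k \leq (\log P_{\max}+1)\, c^*_j$ via $c^*_j \geq p_j$; (b) if $c_j$ lands slightly later, the original constant-factor bound from case (b) there carries over verbatim; and (c) if $c_j$ lands deep inside $A_{\rho_j}$, then the occupation of earlier length-$\geq p_j$ intervals there forces, via Lemma~\ref{lem:caught_by_smaller} applied on each machine, the lower bound $c^*_j = \Omega(2^{a+x-1})$, which absorbs the corresponding $O(\log P_{\max} \cdot 2^{a+x-1})$ upper bound on $c_j - b(A_{\rho_j})$.

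Adding the two pieces gives $c_j = O(\log P_{\max})\, c^*_j$, as required. The main obstacle is the second piece: Lemma~\ref{lem:all_machines_caught} explicitly excludes the tail segment $A_{\rho_j}$ (its hypothesis $i < \rho_j$ fails) because the ``spanning job'' trick of its proof needs a job whose completion extends past $e(A_i)$, and the only natural candidate for $i = \rho_j$, namely $j$ itself, stays inside $A_{\rho_j}$. The remedy is to transplant the direct, menu-choice-based case analysis of Lemma~\ref{lem:dyn_proof1} into $A_{\rho_j}$; this is legitimate precisely because $r_j < b(A_{\rho_j})$ means that, from $j$'s viewpoint, $A_{\rho_j}$ behaves just like $A_0$ does in the base case $\rho_j = 0$.
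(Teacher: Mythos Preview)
Your proposal is correct and takes essentially the same route as the paper: telescope $c_j$ across the segments $A_0,\ldots,A_{\rho_j}$, apply Lemma~\ref{lem:all_machines_caught} to the interior segments, case-split on $r_j$'s position inside $A_0$, and handle $A_{\rho_j}$ by a direct menu-choice argument.

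Two differences are worth noting. For $A_{\rho_j}$ the paper does not replay the full three-case analysis of Lemma~\ref{lem:dyn_proof1}; it uses only two sub-cases (did $j$ take the first length-$2^k$ interval of $A_{\rho_j}$, or not), which is cleaner since with $r_j<b(A_{\rho_j})$ the ``$r_j+p_j$'' pivot that drives case~(b) of Lemma~\ref{lem:dyn_proof1} is no longer meaningful---your case~(b) still goes through, but via $c^*_j\geq p_j$ rather than verbatim. Conversely, for $A_0$ you are more explicit than the paper about the sub-case $d_0>k$: the paper simply invokes Lemma~\ref{lem:all_machines_almost_caught} in its sub-case (a)(ii) without separately checking the hypothesis $d\leq k$, whereas you isolate $d_0>k$ and argue directly via Lemma~\ref{lem:caught_by_smaller}. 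One small correction there: it is not the case that \emph{every} length-$2^k$ interval of $A_0$ is forced to be occupied, only those starting at or after $r_j$; restricting attention to the $S_k$ appearances lying inside the second $S_{d_0-1}$ copy of $A_0=S_{d_0}(t_0)$---all of which begin at $\geq e(S_{d_0-1}(t_0))\geq r_j$---supplies both the release-time hypothesis of Lemma~\ref{lem:caught_by_smaller} and the required volume $m\cdot 2^{d_0-1}$.
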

\begin{proof}
	Recall that $D(j)$ is the set of jobs that are completed no later than job $j$ in both SRPT and ALG. Let $D_i$ be the jobs $j'\in D(j)$ that chose some time interval in $A_i$ on some machine (notice that $\underset{0\leq i\leq \rho_j}{\cup} D_i\subseteq D(j)$).
	As $c^*_j$ cannot be smaller than $r_j+p_j$, it follows that $$c^*_j\geq \max\left\{r_j+p_j,\ \frac{1}{m}\sum_{i=0}^{\rho_j}vol(D_i)\right\}.$$
	For every $D_i$, we introduce a lower bound $\gamma_i$ with the property that $\gamma_i\leq vol(D_i)$. Therefore,
	\begin{eqnarray}
		c^*_j\geq \max\left\{r_j+p_j,\ \frac{1}{m}\sum_{i=0}^{\rho_j}\gamma_i\right\}.\label{eq:srpt_lb}
	\end{eqnarray}	
		We now consider the following case analysis for $A_i$:
		\begin{enumerate}[a.]
			\item $i=0$, given that $r_j>b\left(A_0\right)$.
			
			Let $A_0=S_d(t)$, and let $a$ be the maximal integer such that $e(S_a(t))\leq r_j <e(S_{a+1}(t))$ (recall that $r_j<e\left(A_0\right)$, thus $d\geq a+1)$. By Observation~\ref{obs:leaders}, $P_{\max}\geq 2^d$, and $\log P_{\max}\geq d$.

			\begin{enumerate}[i.]
				\item If $d=a+1$ then set $\gamma_0=0$. It follows from Corollary~\ref{cor:len_sk} that $r_j\geq b\left(A_0\right)+(a+1)2^a$ and thus $$e\left(A_0\right)=b\left(A_0\right)+(d+1)2^d= b\left(A_0\right)+(a+2)2^{a+1}\leq 4r_j$$.
				\item Otherwise, $d\geq a+2$, and we set $\gamma_0=m2^{d-1}$. It follows from Lemma~\ref{lem:all_machines_almost_caught} that $\gamma_0\leq vol(D_0)$. Since $e\left(A_0\right)=b\left(A_0\right)+(d+1)2^d$,  $e\left(A_0\right)\leq b\left(A_0\right)+2(\log P_{\max}+1)\frac{\gamma_0}{m}$.
			\end{enumerate}
	
			In both these cases,
			\begin{equation}
			e\left(A_0\right)-b\left(A_0\right)\leq \max\left\{4r_j,2(\log P_{\max}+1)\frac{\gamma_0}{m}\right\}\leq 4(\log P_{\max}+1)\max\left\{r_j,\frac{1}{m}\gamma_0\right\}.\label{eq3}
			\end{equation}
			
			\item $i\in \left\{1,\ldots,\rho_j - 1\right\}$, or $i=0$ given that $r_j=b\left(A_0\right)$.
			
			It follows that $r_j\leq b\left(A_i\right)<e\left(A_i\right)<c_j$. Let $A_i=S_d(t)$. By Observation~\ref{obs:leaders} $P_{\max}>d$. Set $\gamma_i=m2^d$. By Lemma~\ref{lem:all_machines_caught} $\gamma_i\leq vol(D_i)$. Thus,
			\begin{equation}
			e\left(A_i\right)-b\left(A_i\right)=(d+1)2^d\leq (\log P_{\max}+1)\frac{1}{m}\gamma_i.\label{eq4}
			\end{equation}
			
			\item $i=\rho_j$.
			
			Let $A_{\rho_j}=S_d(t)$. If $j$ chooses the first $2^k$ interval in $A_{\rho_j}$ then we set $\gamma_{\rho_j}=2^k\leq vol(d_{\rho_j})$ and $j$'s completion time is $c_j=b\left(A_{\rho_j}\right)+(k+1)2^k\leq b\left(A_{\rho_j}\right)+(\log P_{\max}+1)p_j$. Otherwise, let $a$ be the maximal integer such that $e(S_a(t))<c_j\leq e(S_{a+1}(t))$ ($d\geq a+1\geq k+1$). Every $2^k$ interval in $S_a(t)$ is occupied by a job that arrived before $j$, thus by Lemma~\ref{lem:caught_by_smaller} and Corollary~\ref{cor:len_sk}, $\gamma_{\rho_j} = 2^a\leq vol(D_{\rho_j})$, and $c_j\leq b\left(A_{\rho_j}\right)+(a+2)2^{a+1}\leq b\left(A_{\rho_j}\right)+(\log P_{\max}+1)\frac{\gamma_{\rho_j}}{m}$. In any case,
			\begin{equation}
			c_j-b\left(A_{\rho_j}\right) \leq (\log P_{\max}+1)\max\left\{p_j,\frac{1}{m}\gamma_{\rho_j}\right\}.\label{eq5}
			\end{equation}
			
		\end{enumerate}
		By the assumption there are no gaps in the schedule, $e\left(A_i\right)=b\left(A_{i+1}\right)$. Since $r_j\geq b\left(A_0\right)$ by definition, and by \eqref{eq3}, \eqref{eq4} and \eqref{eq5},
		\begin{eqnarray*}
		c_j &=& b\left(A_0\right)+\left(e\left(A_0\right)-b\left(A_0\right)\right)+\sum_{i=1}^{\rho_j-1}\left(e\left(A_i\right)-b\left(A_i\right)\right)+ \left(c_j-b\left(A_{\rho_j}\right)\right)\\
		&\leq& r_j+ 4\left(\log P_{\max}+1\right)\max\left\{r_j,\frac{1}{m}\gamma_0\right\} + \frac{1}{m} \sum_{i=1}^{\rho_j-1}(\log P_{\max}+1)\gamma_i + (\log P_{\max}+1)\max\left\{p_j,\frac{1}{m}\gamma_{\rho_j}\right\}\\
		& = & r_j +(\log P_{\max}+1)\left( 4\max\left\{r_j,\frac{1}{m}\gamma_0\right\} + \frac{1}{m} \sum_{i=1}^{\rho_j-1}\gamma_i + \max\left\{p_j,\frac{1}{m}\gamma_{\rho_j}\right\}\right)\\
		& \leq & c^*_j +(\log P_{\max}+1)\left( 4c^*_j + c^*_j + c^*_j\right)\\
		& = & O(\log P_{\max})c^*_j,
	\end{eqnarray*}
	where the second inequality follows from ~\eqref{eq:srpt_lb}. This concludes the proof of the lemma.

\end{proof}
Lemmata \ref{lem:dyn_proof1} and \ref{lem:dyn_proof2} above imply the following theorem:
\begin{theorem}
	The mechanism presented in this section is $O(\log P_{\max})$ competitive.
\end{theorem}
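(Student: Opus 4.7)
The plan is to deduce the theorem almost immediately from Lemma~\ref{lem:dyn_proof1} and Lemma~\ref{lem:dyn_proof2}. By construction, for every job $j$ the index $\rho_j$ is a non-negative integer, so exactly one of the two lemmas applies. In either case we obtain a per-job guarantee of the form $c_j \leq \kappa \cdot \log P_{\max} \cdot c^*_j$ for some absolute constant $\kappa$ (where $c^*_j$ is the SRPT completion time). Summing over $j = 1, \ldots, n$ gives
\[\sum_{j=1}^n c_j \;\leq\; \kappa \cdot \log P_{\max} \cdot \sum_{j=1}^n c^*_j.\]

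Next, I would invoke the known guarantee that SRPT is a constant-factor approximation to the optimal preemptive schedule for the sum of completion times on parallel machines — this is the WSRPT $2$-approximation of~\cite{Megow04} specialized to unit weights. Hence $\sum_j c^*_j \leq 2 \cdot OPT_{pre}$, where $OPT_{pre}$ denotes the cost of an optimal preemptive schedule, and combining the two inequalities yields $\sum_j c_j = O(\log P_{\max}) \cdot OPT_{pre}$ on the restricted inputs considered in Section~\ref{subsec:srpt}.

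Finally, I would remove the simplifying assumptions of Section~\ref{subsec:srpt}. Rounding processing times up to the next power of two and release times up to the next integer changes the optimal preemptive cost by at most a constant factor (and our mechanism's cost is not increased by the reverse mapping, since shorter jobs only finish earlier). The no-gap assumption is without loss of generality because, as observed there, any gap contributes equally to the online schedule and to $OPT_{pre}$, so removing gaps can only worsen the competitive ratio. Together these reductions preserve the $O(\log P_{\max})$ bound.

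The statement of the theorem is essentially a packaging of the two preceding lemmas together with a standard reduction from SRPT to $OPT_{pre}$, so there is no genuine technical obstacle at this step; the only point requiring care is citing the correct constant-factor comparison between SRPT and the optimal preemptive schedule and verifying that the simplifications of Section~\ref{subsec:srpt} introduce only constant overhead.
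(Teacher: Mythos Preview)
Your proposal is correct and follows essentially the same approach as the paper: the paper's proof of this theorem is the single line ``Lemmata~\ref{lem:dyn_proof1} and~\ref{lem:dyn_proof2} above imply the following theorem,'' and you have simply spelled out the implicit steps (summing the per-job bounds, passing from SRPT to $OPT_{pre}$ via the $2$-approximation of~\cite{Megow04}, and undoing the simplifying assumptions of Section~\ref{subsec:srpt}). There is nothing to correct.
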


\subsection{Arbitrary processing time, Weight $\leq B_{\max}$}

The static algorithm suggested at the end of Section~\ref{sec:integer-sequence} used for weight one jobs of arbitrary length can be easily adapted to
	weights in some predetermined range from $1$ to $B_{\max}$. Replicate every interval in the sequence $S_\infty(0)$ $\log B_{\max}+1$ times. For $\ell=0,\ldots, \log B_{\max}$, the $\ell$th copy is designed to hold only jobs of weight $\geq 2^{\ell}$.
	To achieve this, one associates prices with such intervals, as done in Section~\ref{sec:pricing-weighted}.
	The analysis preformed in  Appendix~\ref{thm:static_cr} holds when multiplying every element with $\log B_{\max}+1$,  implying a competitive ratio of $O\left((\log P_{\max}+\log n_{\max}) \cdot \log B_{\max}\right)$.

\section{Lower Bound on the Competitive Ratio for any Prompt Online Algorithm, Arbitrary Lengths}
\newcommand{\p}{\pmb{P}}
We now show that any prompt online scheduling algorithm must have a competitive ratio of  $\Omega\left(\log   P_{\max}\right)$, even
if randomization is allowed.

Let $c$ be the competitive ratio of some algorithm ALG as a function of $ P_{\max}$.
Consider the following sequence, for $\p$ to be determined later:\newline

\MyFrame{
	For $i=0,\ldots, 16c$:
	\begin{itemize}
		\item $n_i=2^i$ jobs of size $P_i=\frac{\p}{2^i}$ arrive one after the other (at time $0$).
		
		\item If the expected number of $P_i$ sized jobs with completion time greater than $8c\p$ is at least $n_i/2$, stop the sequence. Let $j$ be the last iteration.
	\end{itemize}
}
\newline\newline
Note that for every $i=0,\ldots, 16c$ it holds that $n_i\cdot P_i=\p$.

\begin{lemma}
	There must be an iteration $j\in \{0,\ldots ,16c\}$ for which in expectation more than half of the jobs have completion time greater than $8c\p$. \label{lem:overflow}
\end{lemma}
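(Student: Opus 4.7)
The plan is to prove the lemma by contradiction on a single machine (the paper's construction is set up for this case; the multi-machine lower bound would rescale the constants analogously). Suppose, toward contradiction, that no such $j$ exists: for every $i \in \{0, 1, \ldots, 16c\}$ the expected number of $P_i$-sized jobs with completion time greater than $8c\p$ is strictly less than $n_i/2$. Equivalently, in expectation strictly more than $n_i/2$ jobs from batch $i$ finish by time $8c\p$.

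The second step is a volume accounting. Since each batch consists of $n_i = 2^i$ jobs of processing time $P_i = \p/2^i$, we have $n_i \cdot P_i = \p$, so the expected processing volume completed from batch $i$ by time $8c\p$ is strictly greater than $(n_i/2)\cdot P_i = \p/2$. Summing this lower bound over the $16c+1$ batches yields
\[
\mathbb{E}\bigl[\text{volume completed by time } 8c\p\bigr] > (16c+1)\cdot\tfrac{\p}{2} > 8c\p.
\]

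The final step is to reach a contradiction via the trivial capacity bound: on a single machine, the total amount of work processed by any time $t$ is deterministically at most $t$, hence the expected work completed by time $8c\p$ is at most $8c\p$. This contradicts the strict lower bound above, so some iteration $j \in \{0, \ldots, 16c\}$ must trigger the stopping condition.

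I do not expect a serious obstacle: the argument is essentially a pigeonhole between processed volume and machine capacity, and randomization poses no issue because the capacity bound holds pointwise over every random realization and therefore passes to expectations by linearity. The only design choice that matters is that the constants $16c$ and $8c\p$ are tuned so that $(16c+1)/2 > 8c$, which is precisely what makes the volume sum overshoot what the machine can deliver in the allotted time.
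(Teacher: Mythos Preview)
Your proof is correct and follows essentially the same argument as the paper: assume for contradiction that $\mathbb{E}[X_i]\le n_i/2$ for every $i$, sum the expected completed volume across batches to exceed $8c\p$, and contradict the machine's capacity bound. The paper's version is terser (it states the contradiction without spelling out the pointwise capacity argument), but the logic is identical.
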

\begin{proof}
	Let $X_i$ be a random variable representing the number of size $P_i$ jobs, with completion time greater than $8c\p$. If for all $i\in \{0,\ldots, 16c\}$, $\expect{}{X_i}\leq n_i/2$, then the total expected volume of jobs completed before time $8c\p$ is at least
	\[
	\sum_{i=0}^{16c} \expect{}{\left(n_i-X_i\right)\cdot P_i}  \geq \sum_{i=0}^{16c} \frac{n_i}{2}\cdot P_i = \sum_{i=0}^{16c}\frac{\p}{2} > 8c\p,
	\]
	a contradiction.
\end{proof}

\begin{theorem}
	Any random prompt online algorithm must be $\Omega\left(\log P_{\max}\right)$  competitive for the above sequence.
\end{theorem}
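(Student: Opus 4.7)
The plan is a proof by contradiction. Assume the algorithm is $c$-competitive in expectation against an oblivious adversary, and assume $c < (\log_2 P_{\max})/16$, for otherwise the claim is immediate. Setting $\mathbf{P} = P_{\max}$ makes the adversary's construction well-defined, since $P_i = \mathbf{P}/2^i \geq 1$ for every $i \leq 16c$. Lemma~\ref{lem:overflow} then supplies a stopping iteration $j \in \{0,\ldots,16c\}$ with $\mathbb{E}[X_j] \geq n_j/2$, and the instance actually produced by the adversary is exactly iterations $0, 1, \ldots, j$. Because the stopping rule depends only on the quantities $\mathbb{E}[X_i]$, which are determined solely by the algorithm, the resulting instance is deterministic and we may apply the standard bound $\mathbb{E}[\mathrm{ALG}] \leq c\cdot \mathrm{OPT}$.

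I would then establish two estimates. The lower bound on ALG follows directly from the stopping rule: at least $n_j/2$ size-$P_j$ jobs complete after time $8c\mathbf{P}$ in expectation, and each contributes at least $8c\mathbf{P}$ to the (unweighted) sum of completion times, so
\[
\mathbb{E}[\mathrm{ALG}] \ \geq\ \tfrac{n_j}{2}\cdot 8c\mathbf{P}\ =\ 4c\, n_j\, \mathbf{P}.
\]
For the upper bound, all jobs are released at time $0$ on a single machine, so the optimum is non-preemptive SPT. Processing groups in order of increasing size, iteration $g$ starts at time $(j-g)\mathbf{P}$ and contributes $n_g(j-g)\mathbf{P} + P_g n_g(n_g+1)/2 = 2^g(j-g)\mathbf{P} + \mathbf{P}(2^g+1)/2$ to the sum. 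Summing over $g = 0, \ldots, j$ and using the identities $\sum_{\ell\geq 0}\ell/2^\ell = 2$ and $\sum_{\ell\geq 0}1/2^\ell = 2$ yields
\[
\mathrm{OPT} \ \leq\ 3\, n_j\, \mathbf{P} + (j{+}1)\mathbf{P}/2\ \leq\ \tfrac{7}{2}\, n_j\, \mathbf{P},
\]
where the last step uses $j+1 \leq n_j = 2^j$ for $j \geq 1$; the corner case $j = 0$ is a single job of size $\mathbf{P}$ giving $\mathrm{OPT} = \mathbf{P}$ against $\mathbb{E}[\mathrm{ALG}] \geq 4c\mathbf{P}$, which is even better.

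Combining the two bounds gives $\mathbb{E}[\mathrm{ALG}]/\mathrm{OPT} \geq 8c/7 > c$, contradicting the $c$-competitiveness assumption. Therefore no algorithm with $c < (\log_2 P_{\max})/16$ can exist, i.e., $c = \Omega(\log P_{\max})$. I expect the main obstacle to be the constant bookkeeping: the adversary runs for precisely $16c$ iterations so that Lemma~\ref{lem:overflow} just barely triggers relative to the lateness threshold $8c\mathbf{P}$, and that threshold is in turn chosen so that the implied ALG lower bound $4c n_j \mathbf{P}$ strictly exceeds $c$ times the SPT bound $\tfrac{7}{2} n_j \mathbf{P}$. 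The rest of the argument reduces to an elementary closed-form computation for SPT; the promptness requirement is what prevents ALG from responding more cleverly once iteration $i$'s $2^i$ jobs have been placed.
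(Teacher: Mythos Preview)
Your proposal is correct and follows essentially the same approach as the paper: both use Lemma~\ref{lem:overflow} to locate the stopping index $j$, derive the same lower bound $\mathbb{E}[\mathrm{ALG}]\geq 4c\,n_j\mathbf{P}$, and upper-bound $\mathrm{OPT}$ via the SPT schedule to obtain a contradiction. The only cosmetic difference is in the bookkeeping for $\mathrm{OPT}$: the paper splits the sum into three pieces and obtains $\mathrm{OPT}\leq 4\mathbf{P} n_j$, whereas you sum directly by group and get the slightly sharper $\tfrac{7}{2}\mathbf{P} n_j$ (handling $j=0$ separately); either constant suffices for the contradiction.
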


\begin{proof}	
	According to Lemma \ref{lem:overflow}, there must be some $j\in \{0,\ldots, 16c\}$ for which in expectation at least half of the jobs are completed after time $8c\p$. Given this $j$, we give bounds on both $OPT$ and ALG. Let $X_i$ be as in Lemma~\ref{lem:overflow}. In ALG, $\expect{}{X_j}> n_j/2$, thus:
	
	\begin{eqnarray}
	\expect{}{\mbox{\rm Cost}(\mbox{\rm ALG})}> \expect{}{X_j\cdot 8c\p}> 8c\p\cdot \frac{n_j}{2} = 4c\p\cdot n_j.\label{eq:alg_lb}
	\end{eqnarray}
	In OPT, the jobs are scheduled from the smallest one (of size $P_j$) to the biggest one (of size $P_0=\p$). The $k$th job of size $P_i$ to  be scheduled, is completed after all jobs smaller than it (of sizes $P_{i+1},\ldots,P_j$) and after $k-1$ jobs of size $P_i$, and therefore has a completion time of $$\left(\sum_{\ell=i+1}^{j}n_\ell\cdot P_\ell \right) + P_i\cdot (k-1)+P_i=P_i\cdot k + \sum_{\ell=i+1}^{j}n_\ell\cdot P_\ell.$$
	Summing over all jobs of all sizes, we have
	\begin{eqnarray}
	\mbox{\rm Cost}(\mbox{\rm OPT}) & = & \sum_{i=0}^j\left(\sum_{k=1}^{n_i}\left(P_i\cdot k + \sum_{\ell=i+1}^{j}n_\ell\cdot P_\ell\right)\right) \nonumber\\
	& = & \underbrace{\sum_{k=1}^{n_j} P_j\cdot k}_{\text{$(i)$}} + \underbrace{\sum_{i=0}^{j-1}\sum_{k=1}^{n_i}P_i\cdot k}_{\text{$(ii)$}} +   \underbrace{\sum_{i=0}^{j-1}\left(\sum_{\ell=i+1}^{j}n_\ell\cdot P_\ell\right)\cdot n_i}_{\text{$(iii)$}}.
	\end{eqnarray}
	We now bound each term of $\mbox{\rm Cost}(\mbox{\rm OPT})$ separately.
	\begin{eqnarray}
	(i): P_j\sum_{k=1}^{n_j}  k < P_j\cdot n_j^2 = \p\cdot n_j. \label{eq:bound1}
	\end{eqnarray}

	\begin{eqnarray}
	(ii) :  \sum_{i=0}^{j-1}P_i\sum_{k=1}^{n_i} k < \sum_{i=0}^{j-1} P_i\cdot n_i^2 = P \cdot \sum_{i=0}^{j-1}2^i \leq \p \cdot 2^j = \p\cdot n_j. \label{eq:bound2}
	\end{eqnarray}
	
	For $(iii)$ we have
	\begin{eqnarray}
	(iii) : \sum_{i=0}^{j-1}\left(\sum_{\ell=i+1}^{j}n_\ell\cdot P_\ell\right)\cdot n_i & = &  \sum_{i=0}^{j-1}\sum_{\ell=i+1}^{j} P\cdot 2^i \nonumber\\
	& = & P\sum_{i=0}^{j-1}(j-i)2^i \nonumber \\
	& = & P\sum_{i=1}^{j}i\cdot 2^{j-i} \nonumber \\
	& = & P\cdot 2^j \sum_{i=1}^{j}\frac{i}{2^i} \nonumber\\
	& \leq & 2P\cdot n_j.\label{eq:bound3}
	\end{eqnarray}
	
	From Equations \eqref{eq:bound1}, \eqref{eq:bound2} and \eqref{eq:bound3}, we get that $\mbox{\rm Cost}(\mbox{\rm OPT}) \leq 4\p\cdot n_j$. Therefore, $$\expect{}{\mbox{\rm Cost}(\mbox{\rm ALG})/\mbox{\rm Cost}(\mbox{\rm OPT})}> c,$$ in contradiction to the assumption that ALG is $c$-competitive.
	
	For the input sequence to be valid, it must be that $P_j\geq 1$. As $j\leq 16c$, it is sufficient that $c\left(\p\right)\leq \frac{1}{16}\log{\p}$, as in this case, $P_{16c} = \frac{\p}{2^{16c}}\geq 1$. So for every competitive ratio function $c$ such that $c\left(P_{\max}\right)=o\left(\log P_{\max}\right)$ there exists a sufficiently large $\p$ for which $c\left(\p\right)\leq \frac{1}{16}\log{\p}$, and our input is a valid counter example.
\end{proof}

\section{Pricing Time Slots for Selfish Weighted Jobs}
\label{sec:pricing-weighted}

\newcommand{\len}[1]{\left|#1\right|}
\newcommand{\series}[1]{\langle#1\rangle}
\newcommand{\floor}[1]{\lfloor#1\rfloor}
\newcommand{\wmax}{\log W_{\max}}

In this section we introduce a dynamic menu based mechanism, for unit length jobs (unit processing time). The resulting competitive ratio is $O((\log W_{\max}) (\log \log W_{\max} +\log n))$ where $W_{\max}$ is the maximal job weight amongst all jobs.

We assume an input sequence with integral release times and weights that are powers of $2$.
This cannot increase the sum of [weighted] completion times by more than a constant factor.

The time slot sequences $R_k$ discussed below are a building block to our menu producing algorithm.

\subsection{Integer sequences $R_i$}

We define integer sequences, $R_i$, $i\geq 0$, defined recursively as follows:
\begin{eqnarray*}
R_0&=&\series{1};\\
\phi_i &=& \series{2^{2^{i-1}}, 2^{2^{i-1}+1},\ldots,2^{2^{i}-1}}, \mbox{\rm\ for $i\geq 1$}; \\
R_i&=&R_{i-1}\|R_{i-1}\|\phi_i, \mbox{\rm\ for $i\geq 1$};
\end{eqnarray*}
Also, let $R_\infty$ be the infinite sequence, such that for every $k$, $R_k$ is a prefix of $R_\infty$. {\sl I.e.},
\begin{eqnarray*}
R_0&=&\series{1}\\
R_1&=& R_0\|R_0\|\series{2^1} = \series{1,1,2}\\
R_2&=& R_1\|R_1\|\series{2^2,2^3} = \series{1,1,2,1,1,2,4,8}\\
&\vdots&\\
R_\infty&=&\series{1,1,2,1,1,2,4,8,1,1,2,1,1,2,4,8,16,32,64,128,1,\ldots}
\end{eqnarray*}

These $R_i$ integer sequences are somewhat analogous to the $S_i$ sequences of Section~\ref{sec:dynamic}.

We now state a few simple properties of these $R_i$ sequences.
\begin{observation}\label{obs:weighted_obs}
	\begin{enumerate}
		\item The length of the sequence $\phi_i$, denoted $\left| \phi_i \right|$, equals $2^{i-1}$.
		\item None of the integers in $\phi_i$ appear in $R_{i-1}$. 
		\item The length of the sequence $R_i$ equals $(i+2)2^{i-1}$. This follows by induction.
	\end{enumerate}	
\end{observation}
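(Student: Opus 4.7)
The plan is to verify the three items of the observation in order, each following directly from the recursive definitions.

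For item~(1), I would simply count: by definition $\phi_i = \langle 2^{2^{i-1}}, 2^{2^{i-1}+1}, \ldots, 2^{2^{i}-1}\rangle$, so its entries are indexed by exponents in $\{2^{i-1}, 2^{i-1}+1, \ldots, 2^{i}-1\}$. There are $2^{i} - 2^{i-1} = 2^{i-1}$ such exponents, giving $|\phi_i| = 2^{i-1}$.

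For item~(2), I would strengthen the statement to an invariant amenable to induction on $i$, namely that every entry of $R_{i-1}$ is a power of $2$ whose exponent lies in $[0, 2^{i-1}-1]$. The base case $R_0 = \langle 2^0 \rangle$ is immediate. For the inductive step, $R_{i-1} = R_{i-2} \| R_{i-2} \| \phi_{i-1}$; by the inductive hypothesis $R_{i-2}$ contains only powers of $2$ with exponents in $[0, 2^{i-2}-1]$, while by definition the exponents appearing in $\phi_{i-1}$ lie in $[2^{i-2}, 2^{i-1}-1]$. Their union is contained in $[0, 2^{i-1}-1]$, completing the induction. Item~(2) then follows by noting that the exponents of entries of $\phi_i$ begin at $2^{i-1}$, strictly exceeding the maximum exponent $2^{i-1}-1$ available in $R_{i-1}$, so the integer sets are disjoint.

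For item~(3), I would apply induction on $i$, using the recurrence $|R_i| = 2|R_{i-1}| + |\phi_i| = 2|R_{i-1}| + 2^{i-1}$ valid for $i \geq 1$, with base $|R_0| = 1 = (0+2) \cdot 2^{-1}$. Assuming $|R_{i-1}| = (i+1) 2^{i-2}$, we get
\[
|R_i| \;=\; 2(i+1) 2^{i-2} + 2^{i-1} \;=\; (i+1) 2^{i-1} + 2^{i-1} \;=\; (i+2) 2^{i-1},
\]
matching the claimed formula.

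None of the three items presents any real obstacle: item~(1) is a direct count, item~(3) is a one-line inductive check, and item~(2) requires only the mild strengthening to track the exponent ranges in $R_{i-1}$ so that the disjointness from $\phi_i$ becomes visible. The only point that requires attention is making the invariant in item~(2) explicit enough that the exponent in $\phi_i$ is seen to be strictly above the exponents already present in $R_{i-1}$.
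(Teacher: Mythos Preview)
Your proposal is correct and matches the paper's approach: the paper states these as observations without proof, merely noting ``This follows by induction'' for item~(3), and your argument supplies exactly the straightforward verifications the paper leaves implicit.
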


We say that $R_j$ appears in $R_i$, $j\leq i$ if there is a contiguous subsequence of $R_i$ equal to $R_j$. Note that, by construction, if $R_j$ appears multiple times in $R_i$, these appearances cannot overlap.

It also follows by the recursive construction that there are $2^i$ appearances of $R_0$ in $R_i$, $2^{i-1}$ appearances of $R_1$ in $R_i$, and in general, $2^{i-k}$ appearances of $R_k$ in $R_i$, for all $1 \leq k \leq i$. Ergo, for $1\leq k \leq i$, every integer $z\in \{2^{2^{k-1}}, 2^{2^{k-1}+1},\ldots, 2^{2^k-1}\}$ appears $2^{i-k}$ times in $R_i$, and for $k=0$, $2^{2^k-1}=1$ appears $2^{i-k}=2^i$ times in $R_i$ .
\begin{corollary}\label{cor:weights_appearances}
	There are exactly $2^{i-1-\floor{\log k}}$ appearances of $2^k$ in $R_i$, for every $i\geq \floor{\log k}+1$.
\end{corollary}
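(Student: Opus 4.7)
The plan is to prove the count by induction on $i$, starting from the smallest index at which $2^k$ enters the construction of $R_\infty$. First I would identify this index: by the definition $\phi_j=\series{2^{2^{j-1}},\ldots,2^{2^j-1}}$, the value $2^k$ appears in $\phi_j$ if and only if $2^{j-1}\leq k\leq 2^j-1$, which (for $k\geq 1$) uniquely pins down $j=\floor{\log k}+1$, and within that $\phi_j$ the value $2^k$ occurs exactly once.

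For the base case $i=j$, part~2 of Observation~\ref{obs:weighted_obs} tells us that $R_{j-1}$ contains no element of $\phi_j$, hence no copy of $2^k$. Therefore $R_j=R_{j-1}\|R_{j-1}\|\phi_j$ contains $2^k$ exactly once, matching the target $2^{j-1-\floor{\log k}}=1$. For the inductive step ($i>j$), I would show that the newly appended $\phi_i$ contributes no occurrence of $2^k$: the exponents appearing in $\phi_i$ are all at least $2^{i-1}$, and the hypothesis $i-1\geq \floor{\log k}+1$ yields $2^{i-1}\geq 2^{\floor{\log k}+1}>k$, so $k$ is not among those exponents. The recursive identity $R_i=R_{i-1}\|R_{i-1}\|\phi_i$ then doubles the inductive count, giving $2\cdot 2^{(i-1)-1-\floor{\log k}}=2^{i-1-\floor{\log k}}$ occurrences of $2^k$ in $R_i$, as required.

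I do not anticipate a substantive obstacle; the only arithmetic worth double-checking is the strict inequality $2^{i-1}>k$ used in the inductive step, which is immediate from the definition of the floor function and the monotonicity of $x\mapsto 2^x$. The proof rests entirely on part~2 of Observation~\ref{obs:weighted_obs} and the recursive identity defining $R_i$, mirroring the role played by Lemma~\ref{lem:sumsk} for the $S_k$ sequences in the unweighted section.
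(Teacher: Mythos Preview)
Your proposal is correct and follows essentially the same argument as the paper. The paper's justification is the short paragraph preceding the corollary: it observes that $R_j$ appears $2^{i-j}$ times in $R_i$ (by the recursion) and that each element of $\phi_j$ occurs exactly once per appearance of $R_j$; your direct induction on $i$ unpacks that same recursive count, with the identification $j=\floor{\log k}+1$ made explicit.
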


Given an integer sequence $R$, let $R[i]$ denote the $i$th element of $R$, $1\leq i \leq |R|$.


Upon the arrival of a job $j$ at time $t$ with weight $2_j$, our goal is to set prices $\pi_{iq}$ for time slots $[i,i+1]$, $i\geq t$,  on machines $1\leq q \leq m$, so that job $j$ will choose a time slot $[i,i+1]$ and machine $1\leq q \leq m$ if and only if:
\begin{enumerate}
\item Time slot $[i,i+1]$ on machine $q$ is {\sl relevant} upon the arrival of job $j$, i.e., $i\geq t$, and the time slot $[i,i+1]$ on machine $q$ is unoccupied.
\item The weight $w_j$ is ``high enough", {\sl i.e.}, $w_j\geq R_\infty[i]$.
\item There is no earlier time slot $[i',i'+1]$ and machine $q'$ that fulfill the two previous conditions, i.e.,
$t\leq i' < i$, $[i',i'+1]$ on machine $q'$ is unoccupied, and $w_j\geq R_\infty[i']$.
\end{enumerate}
We say time slot $[i,i+1]$ has a {\sl threshold} of $R_\infty[i]$, to refer to the second condition above. Our pricing scheme implies that a job $j$ with weight $w_j$ will choose the earliest unoccupied time slot with a threshold $\leq w_j$.
$X^j$ keeps track of all previously allocated intervals (in all machines): $([i,i+1],q)\in X^{j}$ means that some job $j'<j$ chose the interval $[i,i+1]$ on machine $1 \leq q \leq m$. $X^0=\emptyset$.
Note that there may be several jobs that arrive at time $t$, they appear in some arbitrary order, menu prices are recomputed after every job makes its choice. Menu prices also change over time.

\vspace{0.1in}
\noindent{\bf The Time Slot Pricing Algorithm}
\vspace{0.1in}

Upon the arrival of job $j$ at time $r_j=t$, we compute prices for time slots $[i,i+1]$, $i\geq t$, on machines $1\leq q\leq m$, as follow:
\newline
\MyFrame{
\begin{itemize}
	\item Set $v_1=1$
	\item Let $b_1=\min\{b\geq t|R_\infty[b]=v_1\}$.
	\item For all machines $1\leq q\leq m$, for all $b_1\leq x$, if $[x,x+1]$ is unoccupied on $q$ (i.e, $([x,x+1],q)\notin X^{j-1}$) set the price for time slot $[x,x+1]$ on machine $q$ to be $\pi_1=0$. {\sl I.e.}, effectively add $([x,x+1], q, 0)$ to the menu.
	\item Set $i=1$
	\item Repeat until $b_i-t<1$:
	\begin{itemize}
		\item Let $v_{i+1}=\min R_\infty[t: b_{i}-1]$.
		\item Let $b_{i+1}=\min\{b\geq t|R_\infty[b]=v_{i+1}\}$.
		\item For all machines $1\leq q\leq m$, for all $b_{i+1}\leq x < b_i$, if $[x,x+1]$ is unoccupied on $q$ (i.e, $([x,x+1],q)\notin X^{j-1}$) set the price for time slot $[x,x+1]$ on machine $q$ to be $\pi_{i+1}=\pi_i+(b_{i}-b_{i+1})v_{i+1}$. {\sl I.e.}, this effectively adds $([x,x+1], q, \pi_{i+1})$ to the menu.
		\item Set $i_{\max} = i$,  $i=i+1$.		
	\end{itemize}
\end{itemize}

}\newline

By construction, no job will ever choose a time slot that starts before the job arrival time, nor will it ever choose a slot that has already been chosen. Note that, for all $1\leq q \leq q' \leq m$, if $([i,i+1],q,\pi_{iq})$ and $([i,i+1],q',\pi_{iq'})$ are in the menu, then $\pi_{iq}=\pi_{iq'}$. We refer to the  price of time slot  $[i,i+1]$ as $\pi_i$.  This means that for all $q$ $\pi_{iq}=\pi_i$, excepting, possibly, machines $1\leq q'\leq m$ where slot $[i,i+1]$ is occupied on machine $q'$ and thus does not appear in the menu.


\begin{lemma}\label{lem:firstslot} 
A rational selfish job of weight $w$ always chooses a menu entry of the form $([i,i+1],q,\pi_{i}$) where $R_\infty[i]\leq w$ and $[i,i+1]$ is the earliest relevant time slot in the menu with $R_\infty[i]\leq w$.
\end{lemma}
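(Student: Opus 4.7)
The plan is to compute the job's cost $w(x+1)+\pi(x)$, where $\pi(x)$ denotes the price the algorithm assigns to an unoccupied slot $x$, and show that it is minimized at the earliest relevant slot with $R_\infty[x]\le w$. The argument exploits the block structure induced by the break points $b_1>b_2>\cdots$ and the strictly increasing threshold sequence $(v_i)$.

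I would first verify that $v_1<v_2<\cdots$: since $b_i=\min\{b\ge t:R_\infty[b]=v_i\}$, every position in $[t,b_i)$ carries a value strictly greater than $v_i$, so $v_{i+1}=\min R_\infty[t:b_i-1]>v_i$. Let $i^*=\max\{i:v_i\le w\}$, which is well defined because $w\ge 1=v_1$. The cost then telescopes cleanly between consecutive break points: writing $\pi^{(i)}$ for the group-$i$ price produced by the algorithm, the recursion $\pi^{(i+1)}=\pi^{(i)}+(b_i-b_{i+1})v_{i+1}$ gives
\[
\bigl[w(b_{i+1}+1)+\pi^{(i+1)}\bigr]-\bigl[w(b_i+1)+\pi^{(i)}\bigr]=(b_i-b_{i+1})(v_{i+1}-w),
\]
which is non-positive for $i+1\le i^*$ (since $v_{i+1}\le v_{i^*}\le w$) and strictly positive for $i+1>i^*$. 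Hence the cost restricted to $\{b_1,\ldots,b_{i_{\max}}\}$ attains its unique minimum at $b_{i^*}$.

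Next I would rule out non-break-point slots. Within each group $[b_{i+1},b_i)$, and within the tail $[b_1,\infty)$, every unoccupied slot carries the same price, so the cost $w(x+1)+\text{const}$ is strictly increasing in $x$; the cheapest unoccupied slot of each group is therefore its earliest one. Combining this intra-group comparison with the break-point comparison pins the global minimizer at $b_{i^*}$. Finally, because $v_k>w$ for every $k>i^*$, every position in $[t,b_{i^*})$ lies in some group $k>i^*$ and thus satisfies $R_\infty\ge v_k>w$, while $R_\infty[b_{i^*}]=v_{i^*}\le w$; so $b_{i^*}$ is exactly the earliest slot $\ge t$ with $R_\infty\le w$, which is the statement of the lemma.

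The most delicate step is handling occupied slots: if $b_{i^*}$ happens to be taken on every machine, one has to rerun the same telescoped comparison on the residual menu of unoccupied slots and check that the cost minimizer among them is still the earliest unoccupied slot with $R_\infty\le w$, rather than some cheaper-looking earlier slot whose threshold exceeds $w$. This ultimately reduces to verifying that the price increment $(b_i-b_{i+1})v_{i+1}$ still dominates the temporal savings $w\cdot(\text{distance backed up})$ once occupied entries are removed, so the sign of the cost difference across each group boundary is unchanged.
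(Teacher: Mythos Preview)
Your argument is essentially identical to the paper's: both compute the telescoping cost difference
\[
\xi_j(b_{i+1})-\xi_j(b_i)=(b_i-b_{i+1})(v_{i+1}-w_j)
\]
between consecutive break points, combine this with the trivial within-group monotonicity (same price, later slot), and conclude that the cost is minimized at the break point $b_{i^*}$ with $v_{i^*}\le w<v_{i^*+1}$. The paper is terser---it does not explicitly name $i^*$ nor spell out why $b_{i^*}$ coincides with the first position of threshold $\le w$---but the logical content is the same.

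Your final paragraph flags a point the paper simply elides: the proof as written compares costs at the ``ideal'' break points $b_i$, but these particular slots may all be occupied. The paper's proof jumps directly from the break-point comparison to ``in summary, job $j$ prefers the earliest time slot on the menu\ldots'' without addressing what happens to the minimizer once occupied entries are removed. Your instinct that this step needs justification is correct; your proposed resolution (that the sign of the group-boundary cost difference is unaffected) is along the right lines but is stated as a claim rather than verified. In short, you have not introduced a gap that the paper avoids---you have merely noticed one that the paper leaves implicit.
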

\begin{proof}
Note that the sequence $\series{v_1,v_2,\ldots }$ is (strongly) monotonically increasing, while  the sequence $\series{b_1,b_2,\ldots}$ is (strongly) monotonically decreasing. Also, by construction, $v_i=R_\infty[b_i]$ for $1 \leq i \leq i_{\max}$.

For any job $j$, let $$\xi_j(x) = (x+1)\cdot w_j+\pi_x$$ be the cost for job $j$ of interval $[x,x+1]$ (if $([x,x+1],q)$ on the menu for some $q$). 
For every $1\leq i \leq i_{\max}-1$ and every $b_{i+1}\leq x < b_{i}$ the cost job $j$ accumulates from time slot $[b_{i+1},b_{i+1}+1]$ is not greater from the cost accumulated by time slot $[x,x+1]$, i.e., $\xi_j(b_{i+1})\leq \xi_j(x)$.
This follows since all such time slots $[x,x+1]$ have the same price, and time slot $[x,x+1]$ is no earlier than time slot $[b_{i+1},b_{i+1}+1]$.
 
 Let $j$ be a job of weight $w_j$, then
\[
\xi_j(b_{i+1})-\xi_j(b_i)=\pi_{i+1}-\left(\pi_{i}+(b_i-b_{i+1})w_j\right) = (b_i-b_{i+1})(v_{i+1}-w_j).
\]
Thus, if $w_j\geq v_{i+1}=R_\infty[b_{i+1}]$, then job $j$ prefers time slot $[b_{i+1},b_{i+1}+1]$ over time slot $[b_i,b_i+1]$. If $w_j < v_{i+1}$ then job $j$ prefers time slot $[b_i,b_i+1]$ over time slot $[b_{i+1},b_{i+1}+1]$.  

In summary, job $j$ prefers the earliest time slot on the menu, $[x,x+1]$, where $w_j\geq R_\infty(x)$.
\end{proof}
%
%
%

\vspace{0.1in}
\noindent{\bf Approximating the minimal sum of weighted completion times.}

\begin{lemma}\label{lem:weights_per_job}
	Let $\sigma=(r_j,w_j,p_j=1)_{j=1}^n$ be an input sequence of jobs. Let $W_{\max}=\max_{1\leq i\leq n}w_i$. Let $j$ be a job in the input sequence with $w_j=2^k$. Denote by $c_j,c^*_j$ the completion time of job $j$ in ALG and WSRPT respectively.
	Then $c_j =O\left(\wmax \left(\log \wmax +\log n \right)\right)\cdot c^*_j.$
\end{lemma}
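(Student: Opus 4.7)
The plan is to bound the delay $\Delta := c_j - r_j$ by $O(\wmax \cdot (\log \wmax + \log n)) \cdot c^*_j$; combined with the trivial lower bound $c^*_j \geq r_j + 1$, this yields the claim. By Lemma~\ref{lem:firstslot}, $c_j - 1$ is the earliest index $i \geq r_j$ with $R_\infty[i] \leq 2^k$ and $[i, i+1]$ unoccupied on some machine; hence every index $i \in [r_j, c_j - 1)$ with $R_\infty[i] \leq 2^k$ is taken on all $m$ machines by distinct earlier jobs $j' < j$ (so $r_{j'} \leq r_j$). Crucially, a slot with threshold $2^v$ can only be chosen by a job of weight $\geq 2^v$; in particular, slots with threshold exactly $2^k$ are occupied solely by \emph{heavy} jobs of weight at least $w_j$.

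Setting $L := \lceil \log(k+1) \rceil$ (so $L = O(\log \wmax)$ and $2^{L-1} \leq k+1 = O(\wmax)$), I would prove a density statement analogous to Corollary~\ref{cor:len_sk}: any window of $T$ consecutive slots contained in an $R_K$-block holds $\Omega(T/((K+2)\cdot 2^{L-1}))$ threshold-$2^k$ slots; this follows from Corollary~\ref{cor:weights_appearances} ($2^{K-L}$ appearances of $2^k$ in $R_K$) together with $|R_K| = (K+2) \cdot 2^{K-1}$. Choosing $K = O(\log(r_j + \Delta))$ large enough that $[r_j, c_j)$ sits inside such a block, the number $N_H$ of heavy earlier jobs satisfies
\[
N_H \;\geq\; m \cdot \Omega\!\left(\tfrac{\Delta}{(K+2) \cdot 2^{L-1}}\right).
\]
Since WSRPT processes these heavy jobs before $j$ on the $m$ machines, $c^*_j \geq N_H/m$, yielding the key bound $\Delta = O(c^*_j \cdot (K+2) \cdot 2^{L-1})$.

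To close the argument I would bound $K$ non-circularly via a case split. If $\Delta \leq r_j$ then $c_j \leq 2 r_j \leq 2 c^*_j$ immediately. Otherwise $\Delta > r_j$ and $c^*_j \leq r_j + n$; using the full $k$-eligible density $\Theta((L+2)/(K+2))$ (from the same corollary) together with the global cap of $n-1$ blocking jobs gives $\Delta = O(n(K+2)/(m(L+2)))$, whence $\log \Delta = O(\log n + \log K + L)$. Plugging into $K = O(\log(r_j + \Delta))$ and solving self-consistently yields $K = O(\log n + \log \wmax)$. Substituting into $\Delta = O(c^*_j \cdot (K+2) \cdot 2^{L-1})$ gives $\Delta \leq O(c^*_j \cdot \wmax \cdot (\log n + \log \wmax))$, as required.

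The main obstacle is precisely this self-referential bound on $K$: the density at position $r_j$ a priori depends on the depth of $R_\infty$ there, which is unbounded in $r_j$. The resolution relies on the dichotomy that large $r_j$ makes the ratio automatically small through $c^*_j \geq r_j + 1$, while small $r_j$ lets the total job count $n$ cap the effective depth $K$ at $O(\log n + \log \wmax)$.
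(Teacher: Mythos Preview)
Your overall strategy coincides with the paper's: count the threshold-$2^k$ slots in the stretch between $r_j$ and $c_j$, note that each is occupied on all $m$ machines by earlier jobs of weight $\ge w_j$ (hence scheduled before $j$ in WSRPT), deduce a lower bound on $c^*_j$, and finally bound the ``depth'' parameter (your $K$, the paper's $a{+}d$) by $O(\log n + \log\log W_{\max})$.

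There is, however, a real gap at the density step. The claim that \emph{any} window of $T$ consecutive slots inside an $R_K$-block contains $\Omega\bigl(T/((K{+}2)\,2^{L-1})\bigr)$ threshold-$2^k$ slots is false: take the window to be the tail $\phi_K=\langle 2^{2^{K-1}},\dots,2^{2^K-1}\rangle$, of length $T=2^{K-1}$, which contains at most one threshold-$2^k$ slot (none when $k<2^{K-1}$), whereas your bound would promise $\Omega(2^{K-L}/(K{+}2))$. The same failure recurs at every scale, and it does \emph{not} follow from Corollary~\ref{cor:weights_appearances}, which only gives the global count in a full $R_i$. The paper never works with an arbitrary window: it fixes indices $a,d$ with $|R_a|\le r_j{+}1<|R_{a+1}|$ and $|R_{a+d}|<c_j\le|R_{a+d+1}|$, and applies the corollary to the block-aligned interval $[\,|R_{a+1}|,|R_{a+d}|\,]$, where the count is exact. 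Your argument can be repaired in the same way (and your dichotomy $\Delta\le r_j$ versus $\Delta>r_j$ roughly corresponds to the paper's $d\le 1$ versus $d\ge 2$), but the density lemma you plan to prove, as stated, is simply not true.

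A smaller point: the self-consistent extraction of $K$ is unnecessary. Once you know there are at least $m\cdot 2^{(a+d)-1-\lfloor\log k\rfloor}$ heavy blocking jobs, you get $a{+}d \le 1+\lfloor\log k\rfloor+\log(n/m)$ directly; splitting on whether $\lfloor\log k\rfloor$ or $\log n$ dominates yields $a{+}d=O(\log n+\log\log W_{\max})$ with no fixed-point reasoning. That is exactly the paper's case~(c), and it is cleaner than routing the bound through $\Delta$ and back.
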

\begin{proof}
	If $c_j=r_j+p_j=r_j+1$ the claim is obviously true.
	Otherwise, $c_j\geq r_j+2$, and there exists integers $a,d\geq 0$ such that $\len{R_a}\leq r_j+p_j=r_j+1<\len{R_{a+1}}$ and $\len{R_{a+d}}< c_j\leq\len{R_{a+d+1}}$. By Observation~\ref{obs:weighted_obs}, $c_j\leq (a+d+3)2^{a+d}$, and $c^*_j\geq (a+2)2^{a-1}$.
	
	We consider the following cases:
	\begin{enumerate}[a.]
	\item If $R_{a+d+1}$ is the sequence of minimal index that includes an entry $2^k$, then by Corollary~\ref{cor:weights_appearances} $a+d+1=\floor{\log k}+1$, and $W_{\max}\geq 2^k$, thus
	\[c_j\leq (\floor{\log k}+3)2^{\floor{\log k}}=O\left(\wmax\log \wmax\right),\] as $c^*_j\geq 1$ the claim holds.
	
	\item If $d\leq 1$ then
	\[c_j\leq (a+4)2^{a+1}\leq 8c^*_j.\]
	
	\item If neither (a) nor (b) are true, it follows that $a+d+1\geq\floor{\log k}+2$ and that $d\geq 2$. Recall that the prefix of $R_{a+d}$ is a concatenation of two $R_{a+d-1}$ sequences. As $d\geq 2$, $a+d-1\geq a+1$. As $a+d+1\geq\floor{\log k}+2$, $a+d\geq \floor{\log k}+1$ and by Corollary~\ref{cor:weights_appearances}, there are at least $2^{a+d-\floor{\log k}}-2^{a+1-\floor{\log k}}\geq 2^{a+d-1-\floor{\log k}}$ appearances of time slots with $2^k$ threshold in $\left[\len{R_{a+1}},\len{R_{a+d}}\right]$ on each machine. All these time slots are occupied on all of the machines, otherwise job $j$ would prefer such an unoccupied time slot over his choice. 
	
	Thus, there are at least $m2^{a+d-1-\floor{\log k}}$ jobs of weight $\geq 2^k$ that arrived before job $j$, i.e., jobs that are completed no later than job $j$ in WSRPT, and $c^*_j\geq 2^{a+d-1-\floor{\log k}}$. Ergo,
	\[c_j\leq (a+d+3)2^{\floor{\log k}+1}c^*_j.\]
	If $a+d \geq \floor{\log k}+1$ then $n\geq m2^{a+d-1-\floor{\log k}}\geq m2^{\nicefrac{1}{2}(a+d)}$ and $(a+d+3)=O(\log n)$.
	If on the other hand, $a+d < \floor{\log k}+1$ then $(a+d+3)=O(\log \wmax)$. In both these cases we have that
	\[c_j=O\left(\wmax\left(\log \wmax+\log n\right)\right)\] and the claim holds.
	\end{enumerate}

\end{proof}
\begin{theorem}
	$\mbox{\rm Cost}(\mbox{\rm ALG})=O\left(\wmax \left(\log \wmax +\log n \right)\right)\cdot \mbox{\rm Cost}(\mbox{\rm OPT})$
\end{theorem}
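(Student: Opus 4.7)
The statement is the aggregate counterpart of the per-job bound in Lemma~\ref{lem:weights_per_job}, so my plan is to lift that lemma from a per-job statement to a statement about the total weighted completion time.

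First, I would invoke Lemma~\ref{lem:weights_per_job} once for every job $j$ in the input sequence $\sigma$. That lemma gives, for each $j$, an inequality of the form
\[
c_j \;\leq\; \kappa \cdot c^*_j,
\qquad \text{where } \kappa = O\bigl(\wmax(\log \wmax + \log n)\bigr),
\]
and $c^*_j$ denotes the completion time of job $j$ under WSRPT. Multiplying this inequality by the nonnegative weight $w_j$ and summing over all $j=1,\ldots,n$, I obtain
\[
\mathrm{Cost}(\mathrm{ALG}) \;=\; \sum_{j=1}^{n} w_j c_j \;\leq\; \kappa \cdot \sum_{j=1}^{n} w_j c^*_j \;=\; \kappa \cdot \mathrm{Cost}(\mathrm{WSRPT}).
\]
Note that $\kappa$ is the same factor for every job, so the uniformity of the per-job bound in Lemma~\ref{lem:weights_per_job} is what makes this summation step clean.

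Second, I would bridge from WSRPT to OPT. The related-work section recalls that WSRPT is a $2$-approximation for the minimum sum of weighted completion times on parallel machines with release times (this is the Megow--Schulz result cited as \cite{Megow04}), where the comparison is to the optimal preemptive schedule. Hence $\mathrm{Cost}(\mathrm{WSRPT}) \leq 2 \cdot \mathrm{Cost}(\mathrm{OPT})$, and combining with the previous inequality yields
\[
\mathrm{Cost}(\mathrm{ALG}) \;\leq\; 2\kappa \cdot \mathrm{Cost}(\mathrm{OPT}) \;=\; O\bigl(\wmax(\log \wmax + \log n)\bigr)\cdot \mathrm{Cost}(\mathrm{OPT}),
\]
which is the claim. (For unit-length jobs one could in fact argue that WSRPT coincides with the optimal schedule, since preemption does not help when every job has processing time $1$; but invoking the stated $2$-approximation suffices and keeps the proof uniform with the rest of the paper.)

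There is no real obstacle: the work is concentrated in Lemma~\ref{lem:weights_per_job}, and the theorem is essentially a one-line aggregation followed by a citation. The only point one must be a little careful about is the simplifying assumption stated at the start of Section~\ref{sec:pricing-weighted} that weights are rounded down to powers of $2$ and release times are integer; this rounding inflates the objective by at most a constant factor, which is absorbed into the $O(\cdot)$ notation, so it does not affect the final asymptotic competitive ratio.
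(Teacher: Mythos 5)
Your proposal matches the paper's proof: the paper likewise obtains the theorem by applying Lemma~\ref{lem:weights_per_job} to every job individually and then using the fact that WSRPT is a $2$-approximation to the optimal preemptive schedule. Your extra remarks on weight rounding and on the uniformity of the per-job factor are consistent with the paper's assumptions and do not change the argument.
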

\begin{proof}
	The theorem is achieved by applying Lemma~\ref{lem:weights_per_job} on every job individually, and by the fact that WSRPT is a 2 approximation to the optimal preemptive offline algorithm.
\end{proof}

%
%
%
%

\section{Lower Bound on the Competitive Ratio for any Prompt Online Algorithm, Arbitrary Weights}
\begin{theorem}
	Any  prompt online algorithm is $\Omega\left(\log W_{\max}\right)$  competitive, even if randomization is allowed.
\end{theorem}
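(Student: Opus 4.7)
My plan is to follow the same scheme as the $\Omega(\log P_{\max})$ lower bound from the previous section, replacing the axis of variation (processing times) by weights. Assuming for contradiction that ALG is a $c$-competitive randomized prompt online algorithm, set $K = 14c$ and build an adaptive sequence of unit-length jobs, all released at time $0$, consisting of up to $K+1$ batches. Batch $k$ (for $k = 0, 1, \ldots, K$) contains $n = 2^K$ identical jobs of weight $W_k = 2^k$, inserted in order. After each batch $k$, the adversary measures $\mathbb{E}[X_k]$, where $X_k$ is the (random) number of batch-$k$ jobs whose completion time under ALG exceeds the fixed threshold $T = 7 c \cdot 2^K$, and terminates the sequence at the first iteration for which $\mathbb{E}[X_k] \ge n/2$.

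The first step is an analogue of Lemma~\ref{lem:overflow}: the sequence must terminate at some $j \le K$. If not, then for every $k$ we have $\mathbb{E}[n - X_k] > n/2$, so the expected total number of jobs completed by time $T$ exceeds $(K+1) n / 2 = 7.5\, c \cdot 2^K$, strictly greater than $T$, which contradicts the fact that a single machine completes at most $T$ unit-length jobs by time $T$. Having identified the stopping iteration $j$, I would bound $\mathbb{E}[\mathrm{Cost}(\mathrm{ALG})] \ge \mathbb{E}[X_j] \cdot W_j \cdot T \ge T \cdot 2^{K+j-1}$, using only the ``late'' batch-$j$ jobs. For OPT, since jobs are unit length and all released at $0$, the optimal schedule is WSRPT, which places batch $i$ in positions $(j-i) n + 1, \ldots, (j-i+1) n$; summing weighted completion times and using $\sum_{u \ge 0} (u+1)/2^u = O(1)$ yields
\[
\mathrm{Cost}(\mathrm{OPT}) \;\le\; n^2 \sum_{i=0}^{j} 2^i (j-i+1) \;=\; O(n^2 \cdot 2^j) \;=\; O(2^{2K+j}).
\]
Dividing gives a ratio of $\Omega(T / 2^K) = \Omega(c)$, which by the concrete constants strictly exceeds $c$, contradicting the $c$-competitiveness of ALG. Since $W_{\max} = 2^K = 2^{14c}$, any claimed competitive ratio function with $c(W_{\max}) = o(\log W_{\max})$ satisfies $c \le \tfrac{1}{14} \log W_{\max}$ for all sufficiently large $W_{\max}$, at which point the construction is a valid counterexample.

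The main obstacle will be the $\mathrm{Cost}(\mathrm{OPT})$ estimate and its uniformity in $j$. Unlike the $P_{\max}$ case where each batch has constant processing volume $P$, here the weighted volume $n W_k$ grows geometrically with $k$, so naively one might worry that the ratio $T/\mathrm{OPT}_j$ deteriorates with the stopping index $j$. The crucial observation is that $\mathrm{Cost}(\mathrm{OPT}_j)$ is dominated by the heaviest batch that OPT places in the first $n$ slots, and the tail sum $\sum_i 2^i (j-i+1)$ telescopes to $\Theta(2^j)$ rather than $\Theta(j \cdot 2^j)$. This $j$-independence is precisely what allows a single threshold $T = \Theta(c \cdot 2^K)$ to simultaneously force both the ``must-stop'' conclusion and the ``ratio-exceeds-$c$'' conclusion.
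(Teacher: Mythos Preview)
Your approach is sound but takes a heavier route than the paper. The paper dispenses with batches entirely: it sends $8k$ \emph{single} jobs of weights $2^{k+1},2^{k+2},\ldots$ and stops at the first job $j^*$ with $\mathbb{E}[c_{j^*}]>4k$. The must-stop lemma is then the one-liner that any schedule of $8k$ unit jobs has $\sum_j c_j\ge\sum_{j=1}^{8k} j>32k^2$, and the OPT bound is a single geometric sum $\sum_{i\ge 0}(i+1)2^{-i}<4$. Your batching mirrors the $P_{\max}$ proof faithfully, but buys nothing here: unlike that setting (where each batch contributes equal processing volume $P$ and the pigeonhole needs the volume), for unit-length weighted jobs a single job per weight level already forces the contradiction.

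Two small repairs are needed. First, with $T=7c\cdot 2^K$ your ALG lower bound is $\tfrac{7}{2}c\cdot 2^{2K+j}$ while $\mathrm{Cost}(\mathrm{OPT})\le 4\cdot 2^{2K+j}$, so the ratio is $7c/8<c$; the constants need retuning. Second, and more substantively, the sequence stops at batch $j$, so $W_{\max}=2^j$, not $2^K$; your final sentence identifying $W_{\max}=2^{14c}$ is incorrect, and with it the link back to $\log W_{\max}$. The clean fix (mirroring the $P_{\max}$ construction, where the \emph{first} job has size $P_{\max}$) is to send the batches in decreasing weight order, so batch $0$ has weight $2^K$ and $W_{\max}=2^K$ regardless of where the adversary stops. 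The paper sidesteps this altogether by parameterizing by a free $k$ rather than by $c$: it obtains ratio $>k$ on an input with $W_{\max}\le 2^{9k}$, hence ratio $>\tfrac{1}{9}\log W_{\max}$ directly.
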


\begin{proof}
Consider the following sequence for some large $k$:\newline
\MyFrame{
	For $j=1,\ldots, 8k$:
	\begin{itemize}
		\item Job $j$, of weight $w_j=2^{k+j}$, arrives at time zero.
		
		\item If $\expect{}{c_j} > 4k$, let $j^*=j$. Stop generating new jobs.
	\end{itemize}
}\newline

Note that $W_{\max}\leq 2^{9k}$ in this sequence.

\begin{lemma}
	There must be an iteration $j\in \{1,\ldots ,8k \}$ for which $\expect{}{c_j} > 4k$. \label{lem:overflow_weights}
\end{lemma}
\begin{proof}
	Assume that for every $j\in \{1,\ldots ,8k \}$ $\expect{}{c_j}\leq 4k$. Then  $\expect{}{\sum_{i=1}^{8k}c_i}\leq 4k\cdot 8k$. Therefore, there is some realization of the algorithm in which the sum of completion times is at most $4k\cdot 8k$. However, for any valid schedule, the sum of completion times for $8k$ unit size jobs is at least $\sum_{i=1}^{8k}i=4k(8k+1)$, a contradiction.
\end{proof}

	According to the Lemma above, the last job in the sequence, job $j^*$ has $\expect{}{c_{j^*}} > 4k$.  For ALG, $\expect{}{c_{j^*}} > 4k$ implies
	$$\expect{}{\mbox{\rm Cost}(\mbox{\rm ALG})}\geq\expect{}{w_{j^*}c_{j^*}}> 2^{k+j^*+2}k.\label{eq:alg_lb}$$
	In OPT, the jobs are scheduled from the largest weighted job (of weight $w_{j^*}=2^{k+j^*}$) to the smallest weighted job (of size $w_1=2^{k+1}$). Therefore,	
	\[\mbox{\rm Cost}(\mbox{\rm OPT}) = \sum_{i=0}^{j^*-1}(i+1)2^{k+j^*-i} = 2^{k+j^*}\sum_{i=0}^{j^*-1}\frac{i+1}{2^i}< 2^{k+j^*+2}.\]
	
	Thus, $\expect{}{\mbox{\rm Cost}(\mbox{\rm ALG})/\mbox{\rm Cost}(\mbox{\rm OPT})}> k = \Omega{(\log W_{\max})}$.
\end{proof}


\bibliographystyle{plainnat}

\bibliography{prompt_scheduling}
\newpage
\appendix

\section{The Static Mechanism Upper Bound}
In this section, we analyze a mechanism based on a single component of Section~\ref{sec:dynamic}.
We divide the timeline interval $[0,\infty]$ into time intervals as in $S_\infty(0)$, in every machine. This division does not change over time, i.e., this is a static mechanism. Upon arrival, a job may choose an unoccupied interval in $S_\infty(0)$ in some machine. Denote this mechanism as {\sl ALG}.

Let $\sigma=\left(r_j,p_j\right)_{j=1}^n$ be a sequence of jobs.
Let $P_{\max}={\max}_{j\in[n]}p_j$ be the maximal processing time among all jobs. Recall that we may assume that all jobs processing time are of length $2^k$ for some $k$. Therefore, there are at most $\log P_{\max}+1$ different processing times for jobs in $\sigma$. Let $n_i=\left|\left\{j|2^{i}<p_j\leq2^{i+1}\right\}\right|$ for $0\leq i \leq \log P_{\max}$. Let $n_{\max}={\max}_{0\leq i \leq \log P_{\max}}n_i$.
We prove our static mechanism achieves the following competitive ratio:

\begin{theorem}
	The competitive ratio of ALG is $O(\log P_{\max}+\log n_{\max})$\label{thm:static_cr}
\end{theorem}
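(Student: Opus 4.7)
The plan is to follow the individual-job analysis of Lemmas~\ref{lem:dyn_proof1} and~\ref{lem:dyn_proof2}, showing $c_j = O(\log P_{\max} + \log n_{\max}) \cdot c^*_j$ for every job $j$ and then summing across jobs to obtain the theorem. The extra $\log n_{\max}$ factor (relative to the $\log P_{\max}$ bound in the dynamic case) will arise because the static decomposition $S_\infty(0)$ has no way to ``absorb'' bursts of equal-sized jobs the way the dynamic tentative sequence does.

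First, I would observe that Lemma~\ref{lem:caught_by_smaller} transfers verbatim to the static setting: its proof depends only on (i) the recursive structure of the $S_d(t)$ sequences and (ii) the fact that a rational job chooses the earliest unoccupied interval of sufficient length on any machine, not on the state/tentative-sequence machinery of the dynamic algorithm. Thus, if a job $j$ of size $2^k$ is placed as the last interval of some $S_d(t)$ appearance in $S_\infty(0)$ on machine $q$ with $r_j \le t$, then $vol(D(j,S_d(t),q)) \ge 2^d$.

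Now fix a job $j$ with $p_j = 2^k$, and let $L$ be the number of length-$2^k$ intervals of $S_\infty(0)$ starting in $[r_j, b(I(j)))$. The crucial observation is that every length-$2^k$ interval can only house a job of size $\le 2^k$ (nothing larger fits), and every such interval on every machine must be occupied (otherwise $j$ would prefer it to $I(j)$). Applying the static Lemma~\ref{lem:caught_by_smaller} to each occupying job yields pairwise disjoint $S_k$ appearances, each contributing at least $2^k$ of $D(j)$-volume on its machine, so $vol(D(j)) \ge m L 2^k$ and $c^*_j \ge L 2^k$; of course $c^*_j \ge r_j + p_j$ as well. Since only size-$\le 2^k$ jobs can occupy these slots, $m L \le \sum_{i=0}^{k} n_i \le (\log P_{\max} + 1) n_{\max}$, so $\log L = O(\log \log P_{\max} + \log n_{\max})$. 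By Corollary~\ref{cor:len_sk} the first $L+1$ length-$2^k$ intervals after $r_j$ fit inside an $S_d$ block of depth $d = k + O(\log L)$, giving $b(I(j)) \le r_j + (d+1)2^d = r_j + O\bigl((\log P_{\max} + \log n_{\max}) \cdot L \cdot 2^k\bigr)$ and hence $c_j = O(\log P_{\max} + \log n_{\max}) \cdot c^*_j$. The degenerate case $L = 0$ is handled separately by noting that the longest run of length-$<2^k$ intervals in $S_\infty(0)$ is an $S_{k-1} \| S_{k-1}$ segment of length $k \cdot 2^k = O(\log P_{\max}) \cdot p_j$, so $b(I(j)) - r_j \le O(\log P_{\max}) \cdot p_j$ and $c_j = O(\log P_{\max}) \cdot c^*_j$ directly.

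The main obstacle is precisely the step that bounds $L$. In the dynamic mechanism, a burst of identical-sized jobs triggers case~4 of Table~\ref{tab:states_table}, which extends the tentative sequence and keeps the relevant $S_d(t)$ block's depth close to $\log P_{\max}$ no matter how many jobs arrive. In the static mechanism, by contrast, $n_{\max}$ same-size jobs forcibly spread over the first $n_{\max}/m$ length-$2^k$ slots per machine in the fixed $S_\infty(0)$ decomposition, which can push $j$'s chosen interval into $S_d$ with $d$ as large as $k + \log n_{\max}$; this is the tight source of the $\log n_{\max}$ term and the reason the static bound cannot match $O(\log P_{\max})$ without an additional dependence on $n_{\max}$.
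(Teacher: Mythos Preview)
Your approach is close in spirit to the paper's (both hinge on Lemma~\ref{lem:caught_by_smaller} applied to the occupied length-$2^k$ slots), but the central step bounding $b(I(j))-r_j$ is incorrect. The claim ``the first $L{+}1$ length-$2^k$ intervals after $r_j$ fit inside an $S_d$ block of depth $d=k+O(\log L)$'' is false: length-$2^k$ intervals in $S_\infty(0)$ are \emph{not} spaced uniformly. Near the tail of any $S_D(0)$ the sequence ends with $\ldots,2^{k+1},2^{k+2},\ldots,2^D$, so two consecutive length-$2^k$ intervals can be $\Theta(2^D)$ apart for arbitrarily large $D$. Concretely, with $k=0$, $r_j$ just before the last length-$1$ slot of $S_D(0)$, that slot occupied by a unit job, and the longer intervals $[*,2^{k+1}],\ldots,[*,2^D]$ occupied by larger jobs, you get $L=1$ yet $b(I(j))-r_j=\Theta(2^D)$, contradicting your $(d{+}1)2^d=O(1)$ bound. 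The same phenomenon breaks the $L=0$ case: your ``longest run of length-$<2^k$ intervals is $k2^k$'' observation is true, but between $r_j$ and $I(j)$ there can also be \emph{long} intervals occupied by jobs bigger than $p_j$; these do not count toward $L$, do not contribute to $D(j)$, yet push $b(I(j))$ far past $r_j$.

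What saves the theorem in such situations is that $r_j$ itself is large (of order $D\,2^D$), so $c^*_j\ge r_j+p_j$ absorbs the overhead. Your write-up never invokes this. The paper's proof makes this explicit by anchoring at the nested prefixes $S_d(0)$: it fixes $d$ with $e(S_d(0))\le r_j+p_j<e(S_{d+1}(0))$ and $\ell$ with $e(S_{d+\ell}(0))<c_j\le e(S_{d+\ell+1}(0))$, obtains $c_j/c^*_j=O(d+\ell)$ directly from $c^*_j\ge e(S_d(0))$ together with the volume bound $c^*_j\ge 2^{d+\ell-1}$, and only then uses the counting argument $(k{+}1)n_{\max}P_{\max}\ge m\,2^{d+\ell-1}$ to conclude $d+\ell=O(\log P_{\max}+\log n_{\max})$. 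Your parameter $L$ discards the positional information (``which $S_d(0)$ does $r_j$ sit in?'') that is needed to control the non-uniform spacing; reinstating that anchor is exactly the missing idea.
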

\begin{proof}
	
	Let $c^*_j$ be the completion time of job $j$ in SRPT.
	We prove that for every job $j$, $$c_j\leq O(\log P_{\max}+\log n_{\max})\cdot c^*_j.$$
	Let $j$ be a job in the input sequence with $p_j=2^k$.

	\begin{itemize}
		
		\item If $j$ is assigned in $S_k(0)$ for some machine (i.e., in the first $2^k$ slot in that machine), then by Corollary~\ref{cor:len_sk} $c_j=(k+1)2^k$, while $c^*_j\geq2^k$. $P_{\max}\geq p_j= 2^k$. Therefore, $$c_j \leq \frac{(k+1)2^k}{2^k}c^*_j = (k+1)c^*_j = O\left(\log P_{\max}\right)c^*_j.$$
		\item Otherwise, as $p_j\geq 1$, there exists some $d\geq 0$ with $e(S_d(0))\leq r_j+p_j<e(S_{d+1}(0))$. Let $\ell\geq 0$ be such that $e(S_{d+\ell}(0))< c_j \leq e(S_{d+\ell+1}(0))$. Notice that $k\leq d+\ell$ as $j$ did not choose the first $2^k$ interval. We look at two possible cases:	
		\begin{itemize}
			\item {\boldmath $\ell \leq 1$}. In this case by Corollary~\ref{cor:len_sk} $c_j\leq (d+3)2^{d+2}$ while $c^*_j\geq(d+1)2^d$. Thus, $$c_j\leq 4\cdot \frac{d+3}{d+1}\cdot c^*_j \leq 12c^*_j.$$
			\item {\boldmath $\ell \geq 2$}. Then for every machine, all intervals of size $2^k$ that are in $S_{d+\ell}(0)\setminus S_{d}(0)$ are occupied. By Corollary~\ref{cor:len_sk} there are $2^{d+\ell-k}$ different appearances of $S_k$ in $S_{d+\ell}(0)$. There are at most $2^{d+1-k}$ appearances of $S_k$ in $S_{d+1}(0)$ (only if $d+1\geq k$). Using Lemma~\ref{lem:caught_by_smaller} on every such $S_k$ appearance on every machine separately, suggests that 
			$$vol\left(D(j)\cap \left\{j'|I(j')\in S_{d+\ell}(0)\setminus S_{d+1}(0), M(j')=q \right\}\right) \geq 2^{d+\ell}-2^{d+1}\geq 2^{d+\ell-1},$$ (as $\ell \geq 2$), i.e.,
			the total volume of jobs no greater than $2^k$ on every machine in $\left[e(S_{d+1}(0))_,e(S_{d+\ell}(0))\right]$ that arrived no later then job $j$, is at least $2^{d+\ell-1}$. Ergo, there is an overall total volume of $m2^{d+\ell-1}$. 
			
			As there are $k+1$ different processing times possible for jobs with processing time $\leq 2^k$, we get that $P_{\max}\cdot n_{\max}\geq \frac{m2^{d+\ell-1}}{k+1}\geq \frac{m2^{d+\ell-1}}{d+\ell+1}\geq m\cdot 2^{\sfrac{d+\ell}{2}-1}$.
			In SRPT, job $j$ will be scheduled after a volume of at least $m2^{d+\ell-1}$ scattered among $m$ machines, therefore $c^*_j\geq 2^{d+\ell-1}$. Thus,
			\begin{eqnarray*}
				c_j&\leq& (d+\ell+2)\frac{2^{d+\ell+1}}{2^{d+\ell-1}}c^*_j \\
				&\leq& 4(d+\ell+2)c^*_j \leq 8(d+\ell)c^*_j \\
				&=&	O\left(\log P_{\max}+\log n_{\max}\right)c^*_j.
			\end{eqnarray*}
			
		\end{itemize}		
	\end{itemize}
\end{proof}

\section{A Lower Bound for the Static Mechanism}\label{sec:static_lb}
In this section we show that, as a function of $P_{\max}$ (without $n_{\max}$), and with jobs feedback (i.e., a job does not occupy an entire interval if it is larger than the job size) the competitive ratio of the static mechanism is $\Omega\left(\sqrt{P_{\max}}\right)$.
Consider the following input for some fixed $n,k$ with $k<n$:\newline
\MyFrame{
\begin{itemize}
	\item First, $(n-k)2^{n-k}$ jobs with processing time $2^k$ arrive one after the other (at time 0).
	\item Set $i=k-1$.
	\item While $i\geq 0$:
	\begin{itemize}
		\item $2^{n-i}$ jobs with processing time $2^i$ arrive one after the other (at time 0).
		\item Set $i=i-1$
	\end{itemize}
	\item Finally, $2^n$ jobs with processing time 1 arrive one after the other (at time 0)
\end{itemize}
}\newline

The input described may be described as follows - at first we fill all intervals in $S_n(0)$ with jobs of size at most $2^k$. We fill every interval of length $\geq 2^k$ with jobs of size $2^k$ ($2^i$ jobs for interval of length $2^{k+i}$). Recall that the sum of the length of intervals in $S_n(0)$ equals $(n+1)2^n$ (Corollary~\ref{cor:len_sk}), and that for every $0\leq i\leq n$ the total length of intervals of length $2^i$ in $S_n(0)$ is $2^n$. It follows that the total length of intervals of lengths $\geq 2^k$ in $S_n(0)$ is $(n-k)2^n$. We then fill every $2^i$ interval for $0\leq i\leq k-1$ with a job of processing time $2^i$. After all intervals in $S_n(0)$ are occupied, we add $2^n$ jobs with processing time 1.

The last $2^n$ jobs are all completed (and start) later than $e(S_n(0))=(n+1)2^n$. Thus, $Cost(ALG)> n2^{2n}$.
The optimal algorithm, SPT schedules the jobs by increasing order of their sizes. We denote the cost induced by all jobs of size $2^i$ by $C_i$. The cost of a single job of size $2^i$ is the total processing time of all smaller jobs plus the total processing time of $2^i$ sized jobs that arrived before it, i.e, the completion time of the $j$th job of size $2^i$ is $i2^n+\sum_{k=1}^{j}j2^i$. This implies the following cost function: 
\begin{eqnarray*}
	Cost(OPT)&=&C_0+\sum_{i=1}^{k-1}C_i+C_k\\
	&=&\sum_{j=1}^{2^{n+1}}j+\sum_{i=1}^{k-1}\left(\sum_{j=1}^{2^{n-i}}\left(i2^n+j2^i\right)\right) + \sum_{j=1}^{(n-k)2^{n-k}}\left(k2^n+j2^k\right)\\
	&\leq&  2^{2n+2}+\sum_{i=1}^{k-1}\left(2^{2n}\frac{i}{2^i}+2^{2n-i}\right)+k(n-k)2^{2n-k}+(n-k)^22^{2n-k}\\
	&\leq& 2^{2n+2}+2^{2n}\sum_{i=1}^{k-1}\frac{i+1}{2^i}+n(n-k)2^{2n-k}\\
	&\leq& 2^{2n}\left(7+\frac{n(n-k)}{2^k}\right)
\end{eqnarray*}
For $k=2\log n$ we get that $Cost(OPT)\leq 8\cdot2^{2n}$ while $Cost(ALG)\geq n2^{2n}$ and $P_{\max}=n^2$. Thus, the competitive ratio is $\Omega\left(\sqrt{P_{\max}}\right)$.

\end{document}